%% file: arxiv-version.tex
\documentclass[sigplan,10pt]{acmart}
\renewcommand\footnotetextcopyrightpermission[1]{}

\usepackage{tikz}
\usepackage{amsmath}

\usepackage{filecontents}

\usepackage[labelformat=simple]{subcaption}
\usepackage{enumitem}
\usepackage{tcolorbox}
\usepackage{algorithm}
\usepackage[noend]{algpseudocode}
\usepackage{listings}
\usepackage[xcolor]{changebar}
\usepackage{soul}
\usepackage{etoolbox}
\newbool{publicversion}
\usepackage{caption}
\usepackage{lineno}
\usepackage{lipsum}
\usepackage{amsthm}
\usepackage{thmtools, thm-restate}
\usepackage{graphicx}

\usepackage{amsmath, amssymb}
\usepackage[bottom]{footmisc}
\usepackage[subtle]{savetrees}

\newtheoremstyle{proofSketchStyle}%
      {\topsep}{\topsep}%
      {\upshape}%          Body font
      {}%                 Indent amount (empty = no indent, \parindent = para indent)
      {\itshape}%        Head font
      {.}%                Punctuation after head
      {5pt plus 1pt minus 1pt}% Space after head (\thm@space)
      {\thmname{#1}\thmnumber{ #2}\thmnote{ (#3)}}% Head spec

\newtheorem{lemma}{Lemma}

\theoremstyle{proofSketchStyle}
\newtheorem*{proofSketch}{Proof Sketch}

\theoremstyle{definition}
\newtheorem{definition}{Definition}

\algrenewcommand{\algorithmiccomment}[1]{\hfill\ensuremath{//} #1}
\makeatletter
\newcommand{\LeftComment}[1]{%
  \Statex \hskip\ALG@tlm // #1%
}
\makeatother
\algrenewcommand{\Call}[2]{\textbf{call}\ \textsc{#1}(#2)}

\newcommand{\shock}[0]{shock}
\newcommand{\dsl}[0]{Nyx}
\newcommand{\mftcm}[0]{$\texttt{CM}_{\texttt{mft}}$}
\newcommand{\phenomenological}[0]{phenomenological}
\newcommand{\ontological}[0]{analytical}

\setbool{publicversion}{true}

\definecolor{codegreen}{rgb}{0,0.6,0}
\definecolor{codegray}{rgb}{0.5,0.5,0.5}
\definecolor{codepurple}{rgb}{0.58,0,0.82}
\definecolor{backcolour}{rgb}{1,1,1}

\newif\ifshowcomments
\showcommentstrue

\ifbool{publicversion}{
  \newcommand{\grumbler}[2]{}
  \newcommand{\grumblermargin}[2]{}
  
  \newcommand{\pagelimit}[1]{}
  \newcommand{\todos}[1]{}
  }

\lstdefinestyle{mystyle}{
    backgroundcolor=\color{backcolour},   
    commentstyle=\color{codegreen},
    keywordstyle=\color{magenta},
    numberstyle=\tiny\color{codegray},
    stringstyle=\color{codepurple},
    basicstyle=\ttfamily\scriptsize,
    breakatwhitespace=false,         
    breaklines=true,                 
    captionpos=b,                    
    keepspaces=true,                 
    numbers=none,
    frame=none,
    numbersep=0pt,
    showspaces=false,                
    showstringspaces=false,
    showtabs=false,                  
    tabsize=2,
    gobble=8
}

\begin{document}
\emergencystretch 3em

\title{Characterizing Metastable Faults and Failures}
%\author{Submission \#1979}
% \author{
% {Ali Farahbakhsh}\\
% Cornell University
% \and
% Qingjie Lu\\
% University of Pennsylvania
% \and
% Lorenzo Alvisi\\
% Cornell University
% \and
% {Andreas Haeberlen}\\
% University of Pennsylvania
% \and
% Robbert Van Renesse\\
% Cornell University
% }

\author{
Ali Farahbakhsh$^\dagger$, Qingjie Lu$^*$, Lorenzo Alvisi$^\dagger$, Andreas Haeberlen$^*$, Robbert Van Renesse$^\dagger$\\
$^\dagger$Cornell University, $^*$University of Pennsylvania
}

% \author[1]{Ali Farahbakhsh}
% \author[2]{Qingjie Lu}
% % \author[1]{Lorenzo Alvisi}
% % \author[2]{Andreas Haeberlen}
% % \author[1]{Robbert Van Renesse}
% \affil[1]{Cornell University}
% \affil[2]{University of Pennsylvania}

\begin{abstract}
\input{abstract}
\end{abstract}

\settopmatter{printfolios=true}
\maketitle
\pagestyle{plain}

%\onecolumn

%\vspace{-5em}

% \begin{abstract}
% \input{abstract}
% \end{abstract}

\input{introduction}
\input{case-1}
\input{model}
\input{faults}
\input{failures}
\input{macroscopics}
\input{case-2}
\input{related}
\input{conclusion}

\bibliographystyle{abbrv} 
\begin{small}
\bibliography{main}
\end{small}

\clearpage

\appendix
\input{appendix}
\end{document}
%%%%%%%%%%%%%%%%%%%%%%%%%%%%%%%%%%%%%%%%%%%%%%%%%%%%%%%%%%%%%%%%%%%%%%%%%%%%%%%%

%%  LocalWords:  endnotes includegraphics fread ptr nobj noindent
%%  LocalWords:  pdflatex acks

%% file: abstract.tex
Metastable failures are hard to detect, prevent, and mitigate.
During a metastable failure, a system exhibits self-sustaining bad behavior even in the absence of adversarial conditions. Prior work focuses on symptoms and has 
portrayed metastable failures as instances of self-sustaining overload. This characterization leaves the underlying failure causes and dynamics unknown, and does not account for metastable failures that do not manifest as overload.

We present the first causal  characterization of metastable failures by identifying their origin in {\em metastable faults}, {\em i.e.}, structural destabilizing cycles of interaction among systems components that, in isolation, are stabilizing. Metastable failures arise when scheduling decisions let these destabilizing interactions gain the upper hand over the individual components' stabilizing tendencies. 
We then derive a methodology to predict metastable failures, and to build metastable-fault-tolerant (MFT) systems.
We apply our methodology to three case studies, showcasing the generality of our results.

%We show that they happen when components in a composition destabilize each other while trying to stabilize locally 
%We introduce a separation between metastable faults and failures: faults are structural destabilizing interaction cycles among components.
%Failures, in turn, happen when the scheduler prioritizes the fault over stabilizing tendencies, leading to perpetual instability.
%Our formalization reveals that metastable faults are necessary to have metastable failures.
%We then derive a methodology to predict metastable failures, and to build metastable-fault-tolerant systems; systems that do not exhibit failures despite having a fault.
%The key is to augment the system with the right scheduler.
%We apply our methodology to three case studies, showcasing the generality of our results.

%% file: introduction.tex
\section{Introduction}
%\la{I have added the "changebars" command, and a boolean "publicversion" (in main) which, if set to true, makes all the changebars disappear.}

\emph{Metastable failures}~\cite{understanding-vicious-cycles-khan2011, qian-2023-viciouscycles, ford2012icebergs, link-imbalance-bronson2014, floyd1993synchronization, GGL1, AWS1, AWS2, AWS3, SPF1} have drawn significant attention in the systems community as a particularly challenging failure mode---one in which a system exhibits {\em self-sustaining bad behavior after a finite \shock{}}~\cite{bronson-2021-metastability, huang-2022-metastable, analyzing-metastable-failures, hotnets2025, habibi-2023-msfmodel, blueprint-anand2023, brooker-2023-socc-keynote}.
For instance, a temporary surge in client load might lead to persistently low goodput even after the load goes back to normal.
The system does not control the \shock{}; however, the ensuing bad behavior is due solely to how the system works.
Such failures are costly~\cite{huang-2022-metastable}, hard to detect or predict~\cite{link-imbalance-bronson2014}, and difficult to mitigate~\cite{GGL1, AWS1, AWS2, AWS3}, making it critical to understand their underlying causes.

The prevailing characterization of metastable failures is mostly \emph{\phenomenological{}}: it starts from symptoms---high latency and low goodput under normal conditions---and equates metastability with self-sustaining overload~\cite{bronson-2021-metastability, huang-2022-metastable}.
It then identifies the culprit as either an internal load amplification or capacity degradation~\cite{huang-2022-metastable}.
By focusing on symptoms, this approach fails to fully explain the underlying causes that give rise to metastable failures, leaving us vulnerable in two ways: ($i$) when these failures occur, we lack targeted remedies---reports repeatedly show that only drastic measures such as restarting the system or disabling communication can halt a live failure~\cite{GGL1, AWS1, AWS2, AWS3, SPF1}; and ($ii$) we risk overlooking metastable failures that do not manifest with the same symptoms.

To move beyond these limitations, we return to a fundamental principle of fault tolerance: {\em every failure stems from a fault}.
%We rely instead on the fault tolerance credo: if there is a failure, then there is a fault.
We present the first \ontological{} characterization of metastable failures that reveals the structural faults that cause them---henceforth called metastable faults.
\emph{Metastable faults are sins of composition}: they
capture circular destabilizing interactions among individually stabilizing components.
Each component responds to a \shock{} by trying to stabilize locally, but in doing so destabilizes others, coupling them in perpetual destabilization and producing self-sustaining bad behavior.
This characterization also explains the usual symptoms of high latency and low goodput: resources that in the stable state would be spent in useful work are instead continuously wasted in futile attempts to stabilize.

Building on our experience studying various metastable failures, we outline a methodology for predicting such failures and designing \emph{metastable-fault-tolerant} (MFT) systems.
Our methodology revolves around a proof outline for showing that a system is MFT,~{\em i.e.}, that the system avoids metastable failures despite harboring a metastable fault.
We also introduce a domain-specific language (DSL), \dsl{}, to ($i$) reproduce said failures {\em in vitro} and ($ii$) assist designers in applying the methodology.
\dsl{} comes with a toolkit including an interpreter and a tool to visualize metastable failures.%\af{More on \dsl{}}.

Our methodology relies on a key insight:
metastable failures emerge when poor scheduling favors the fault over stabilizing interactions.
In the absence of stabilizing interactions, perpetual destabilization can be trivially inevitable.
%That composing a switch that turns a light off with one that turns it on yields an oscillating bulb is unsurprising---the switches were never meant to be composed.
Metastable failures occur instead when the components \emph{can} help each other stabilize, but any stabilization is interrupted by some destabilizing event.
The reason is a fatal coupling between destabilization and poor scheduling: the scheduler favors the fault's destabilizing tendencies, pushing the system away from stability, creating in turn more destabilizing interactions, prompting the system to make more bad scheduling decisions---{\em ad infinitum}.

Based on this insight, our methodology augments compositions of stabilizing components with scheduling decisions that make them MFT.
These decisions must:
($i$) schedule destabilizing interactions tentatively, and
($ii$) defer destabilizing interactions until stabilizing ones have achieved global stability.
The latter ensures eventual global stability; the former ensures that, once achieved, stability is never lost.

We prove metastable fault tolerance with respect to specific adversaries.
Faults are notoriously difficult to locate~\cite{link-imbalance-bronson2014, understanding-vicious-cycles-khan2011}, and often require post-mortem analysis.
However, these faults are triggered by an adversary.
Therefore, proving stabilization despite an adversary imposing \shock{}s on the system establishes tolerance against all faults that the adversary can trigger---a single proof covers a wide range of faults.
We envision research on metastable faults and failures to resemble security research: new adversaries will expose new vulnerabilities.

% The second insight is that metastable failures can be visualized even without knowing of the faults causing them.
% Following similar work~\cite{analyzing-metastable-failures}, we visualize metastable failures using vector fields: graphs demonstrating a system's dynamic behavior starting from various points of the state space.
% Given the tug of war between destabilizing and stabilizing interactions, metastable failures manifest as two coexisting tendencies within the vector field: one pulling towards good stable states, and the other pushing away from it.
% Our characterization and methodology makes it easier for the designer to pick suitable axes for the vector field, revealing the mentioned tendencies.
% The \dsl{} toolkit facilitates generating vector fields for systems expressed using the \dsl{} DSL, thereby helping designers predict metastable failures.

We have applied our methodology to three case studies: ($i$) the retry storm~\cite{GGL1, AWS1, AWS2, AWS3, retry-anti-pattern}, which serves as the running example for our characterization, ($ii$) a novel case study from a major commercial gaming platform with hundreds of millions of users, where metastability manifests as oscillation, and ($iii$) the look-aside cache incident~\cite{bronson-2021-metastability, huang-2022-metastable}, used to demonstrate the generality of our approach.

For the second case study, we use an internal post-mortem of an incident involving a cluster manager in one of the platform's data centers. We design a new cluster manager and prove it MFT. The fault in this incident is subtle and hard to locate without post-mortem insight, yet we show that fault tolerance is achievable without prior knowledge of the fault.
The third case study instead illustrates that sometimes one can simply remove the fault, thereby eliminating also the failure.

Our methodology is a modest first step towards developing and deploying MFT software in production.
Metastable faults span the entire stack~\cite{link-imbalance-bronson2014, understanding-vicious-cycles-khan2011}, and the vast scale of production systems makes locating them even more challenging.
As a result, while the community has developed robust toolboxes to prevent,  detect, and mitigate other modes of failure,~{\em e.g.}, deadlocks~\cite{Dijkstra-banker, havender68, wang2008gadara}, we lack an equivalent toolbox for metastable failures.
Our characterization lays the foundation  for such a pursuit by identifying the underlying principles, and our methodology demonstrates the practical utility of this characterization.

In short, we make the following contributions:
\begin{itemize}
    \item We introduce metastable faults, and show how together with poor scheduling they lead to metastable failures;
    \item we present a methodology for predicting metastable failures and designing MFT systems; and
    \item we introduce \dsl{}, a DSL to reproduce metastable failures in vitro.
\end{itemize}
% ($i$) we introduce metastable faults, a structural cause of metastable failures

The paper is structured as follows.
We first use the retry storm incident as a motivating running example (\S\ref{sec:background}) to explain, given a system model (\S\ref{sec:model}), our formal characterization of metastable faults (\S\ref{sec:faults}) and failures (\S\ref{sec:failures}).
We then present our methodology (\S\ref{sec:methodology}), and apply it to the privately reported incident (\S\ref{sec:MFT}).
Finally, we discuss related work (\S\ref{sec:related-work}) and present our conclusions (\S\ref{sec:conclusion}).
Space constraints prevent us from including the discussion of the look-aside cache incident here; it is provided in the supplementary materials (Section~\ref{sec:cache-incident}).

% For its canonical image, we ask the reader to conduct a thought experiment: what happens if we, starting from the edge of a bowl, drop several magnets inside the bowl.
% Is it guaranteed that all of the magnets will stabilize at the bottom of the bowl?

%% file: case-1.tex
\section{Background and Motivation}\label{sec:background}

Among reported metastability incidents, one stands out: the {\em retry storm}.
It has been observed repeatedly in practice~\cite{GGL1, AWS1, AWS2, AWS3, retry-anti-pattern}, and much of the literature focuses on retry storms~\cite{huang-2022-metastable, habibi-2023-msfmodel, analyzing-metastable-failures}.
%, often equating metastability with “self-sustaining congestive collapse”~\cite{analyzing-metastable-failures}.
This makes it an ideal example to ($i$) illustrate the limits of today's \phenomenological{} approach and ($ii$) serve as a running example in our \ontological{} characterization (\S\ref{sec:faults}-\ref{sec:failures}).

%Among the reported incidents on metastability, one serves a special purpose: \emph{the retry storm}.
%It is the poster child of metastable failures, and has been observed repeatedly in practice~\cite{}.
%Much of the existing literature focuses on the retry storm~\cite{huang-2022-metastable, habibi-2023-msfmodel, analyzing-metastable-failures}, which has led to a definition that equates metastability with ``self-sustaining congestive collapse''~\cite{analyzing-metastable-failures}.
%It thus serves as an ideal candidate to ($i$) explain the shortcomings of the \phenomenological{} approach to studying metastable failures, and to ($ii$) use as a running example in our \ontological{} characterization of metastability (\S\ref{sec:faults}-\ref{sec:failures}).

%The purpose of this section is to set the stage for the characterization in Section~\ref{sec:metastability} by explaining the retry storm with the help of \dsl{}, our DSL for studying metastable failures, as it were, {\em in vitro}.
We start with a brief primer on \dsl{} (\S\ref{sec:nimbus-primer}).
We then instantiate the retry storm using \dsl{} and review it (\S\ref{sec:incident-retry-storm}).
Finally, we argue that the \phenomenological{} approach, while helpful, fails to fully explain metastable failures (\S\ref{sec:existing-approach}).

\subsection{\dsl{} Primer}\label{sec:nimbus-primer}

\dsl{} is an imperative language with a small set of abstractions that repeatedly surface when studying metastability incidents: ($i$) {\em agents}---entities in a distributed execution that send/receive messages and update local state; ($ii$) input/output {\em queues} for communication; and
($iii$) internal {\em resources} that agents use to schedule state transitions.

A \dsl{} agent defines a set of tasks, similar to threads, plus a main task scheduling and executing them based on resource availability.
\dsl{} interprets code for a group of agents and delivers requests to them during execution.
Executions proceed in lock-step: at each step, every agent runs its main task once.

\subsection{The Retry Storm}\label{sec:incident-retry-storm}

Consider a system with a fixed service capacity serving client requests, and imagine a sudden surge in client demand temporarily exceeding this capacity.
Such a \shock{} leads to congestion and increased latency, which in turn triggers client retries.
Persist this long enough and the volume of retries grows so large that, even after client demand returns below capacity, congestion prevails.
The retry loop sustains overload and high latency, creating a self-perpetuating cycle of congestion and retries.

\textbf{Instantiation.}
Figure~\ref{listing:retry} illustrates a \dsl{} model of a system prone to a retry storm.
The system has two agents: {\em server} and {\em retrier}.

The server has 35 units of a resource of type~$\texttt{CPU}$\footnote{This name is evocative, not tied to an actual CPU.}, and two tasks: $\texttt{serve}$ and $\texttt{main}$.
$\texttt{serve}$ consumes~$\texttt{CPU}$ resources to process requests from input queue~$\texttt{inq}$---declared implicitly for every agent---and sends acknowledgments to the retrier; it abstracts away service logic.
$\texttt{main}$ schedules~$\texttt{serve}$ using a~$\texttt{soft}$ policy, allowing execution until $\texttt{CPU}$ is exhausted.
\dsl{} handles resource accounting;~$\texttt{serve}$ consumes one unit of~$\texttt{CPU}$ per every request, and when~$\texttt{CPU}$ is depleted,~$\texttt{serve}$ pauses until the next step, when~$\texttt{CPU}$ is replenished.
For simplicity, we assume that~$\texttt{inq}$ has infinite capacity.
%This does not qualitatively change why and how a retry storm emerges. 

The retrier consists of three tasks:~$\texttt{manage}$,~$\texttt{retry}$, and~$\texttt{main}$.
$\texttt{manage}$ handles incoming requests---adds~$\texttt{client}$ requests to state variable~$\texttt{pending}$, forwards them to the server, and removes them upon receiving an~$\texttt{ack}$.
$\texttt{retry}$ implements a retry loop.
For each pending request, it increments its timer: upon  exceeding a threshold (every four steps), it sends a retry to the server and resets the timer.
$\texttt{main}$ schedules both~$\texttt{manage}$ and~$\texttt{retry}$ without resource constraints.
The retrier logically centralizes client behavior, behaving similar to multiple clients acting in concert.
We assume an external agent (omitted for brevity) that continuously sends \texttt{client} requests to the retrier in an open loop.

\begin{figure}[t]
        \begin{lstlisting}[language=Python, numbers=left, basicstyle=\scriptsize, mathescape]
         $\textbf{RETRIER}$:
          Task manage:
              while not inq.isEmpty():
                  $\texttt{req}$ $\leftarrow$ inq.get()
                  if $\texttt{req}$ is "client":
                      send($\texttt{req}$, server)
                      pending $\leftarrow$ pending $\cup\ \{\texttt{req}\}$
                  else if $\texttt{req}$ is "ack":
                      pending $\leftarrow$ pending $\setminus \ \{\texttt{req}\}$
          Task retry:
              for $\texttt{req}$ in pending:
                  timers[$\texttt{req}$] += 1
                  if timers[$\texttt{req}$] == 4:
                      send($\texttt{retry}$, server)
                      timers[$\texttt{req}$] = 0
          Task main:
              execute(manage)
              execute(retry)

         $\textbf{SERVER}$:
          init resources = <CPU: 35>
          Task serve consumes <CPU>:
                while not inq.isEmpty():
                    $\texttt{req}\ \leftarrow$ inq.get()
                    send($\texttt{ack}$, retrier)
          Task main:
                execute(serve, soft)
        \end{lstlisting}
        
    \caption{\dsl{} expression for the retrier and the server. The client is not shown for lack of space. The server's finite capacity is captured by its consumption of \texttt{CPU} and the \texttt{soft} execution policy.
    }
    \label{listing:retry}
\end{figure}

\textbf{Demonstration.} Figure~\ref{fig:retry-storm} shows the results of running our retry storm instantiation.
The server processes~$s=35$ requests per timestep under a nominal client load of~$r=30$, with retrier timeout~$T=4$ timesteps.
We analyze three scenarios: ($i$) {\em No \shock{}} -- the system operates normally; ($ii$) {\em Safe \shock{}} -- a transient surge in client load that does not destabilize the system; and ($iii$) {\em Unsafe \shock{}}, a surge that triggers self-sustaining congestion.
The figure reports two metrics: pending requests in the retrier (a proxy for latency), and acks received by the retrier (a proxy for goodput).

Under an unsafe \shock{}, the backlog of pending requests grows unbounded, and goodput collapses.
A safe \shock{} causes only temporary disruption and even a short-lived goodput boost due to the elevated client load.

\begin{figure}[t]
  \centering
    \subfloat[Goodput.]{\includegraphics[width=0.45\columnwidth]{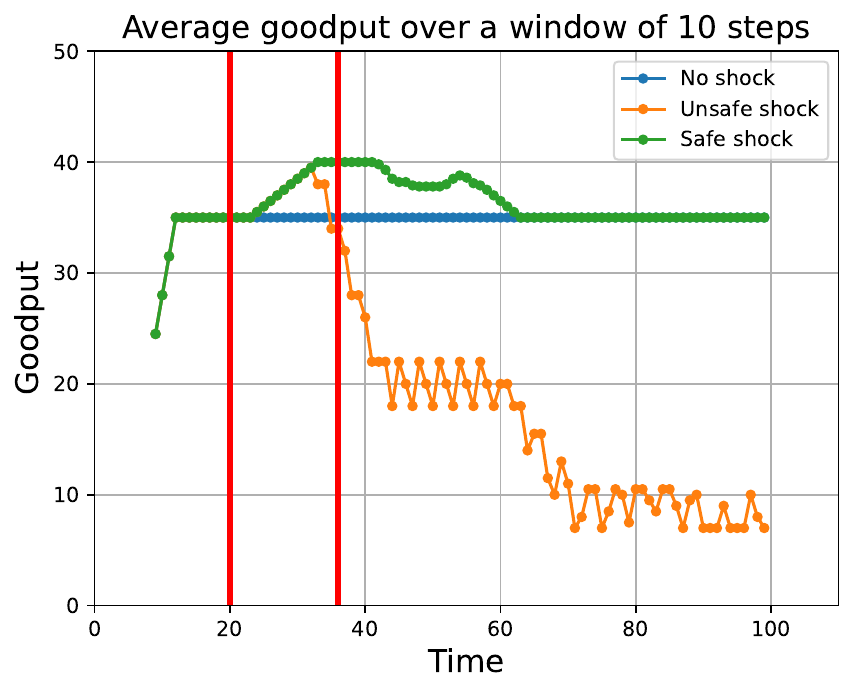}%\label{fig:retry-storm-notrigger}}
    }
    %\hspace{5mm}
    \subfloat[Pending requests.]{\includegraphics[width=0.49\columnwidth]{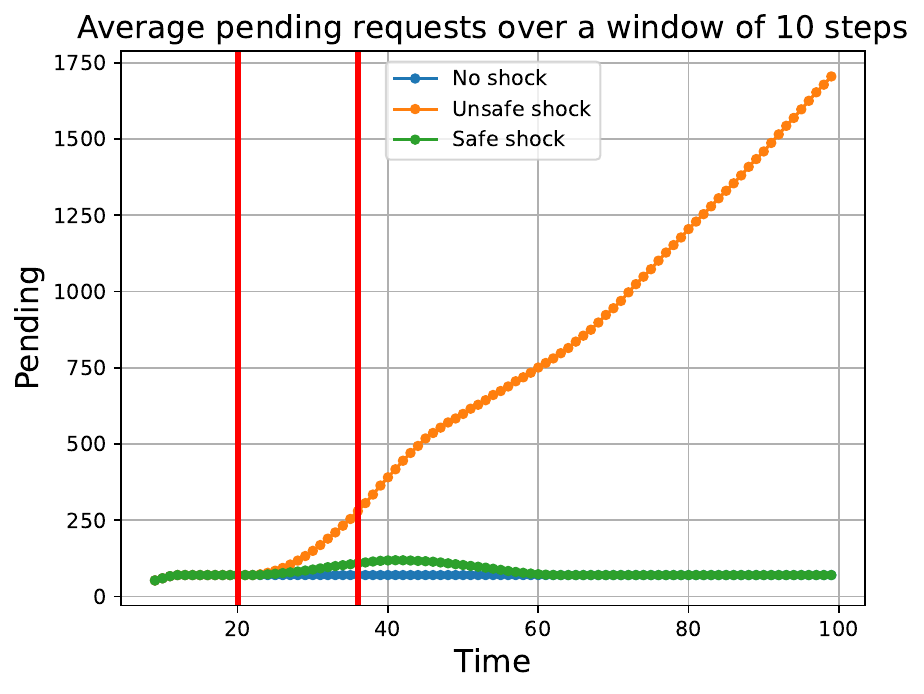}%\label{fig:retry-storm-withtrigger}}
    }
    \vspace{-2mm}\caption{\dsl{} instantiation of the retry storm for~$s=35,~r=30$, and~$T=4$. The red lines indicate when the surge in client load starts and ends.}\label{fig:retry-storm}
\end{figure}

\textbf{Review of the Current Approach.} The prevailing characterization of metastability~\cite{huang-2022-metastable} explains the retry storm as a case of {\em retry-driven workload amplification}.
It starts from how metastable failures usually manifest---high latency and low goodput under otherwise normal operating conditions---and puts the blame on either an internal amplification of load or a degradation of service capacity. These are the self-sustaining effects of a feedback loop that perpetuates the metastable failure.
The \shock{} temporarily amplifies workload or degrades capacity, long enough for the self-sustaining effects to take hold.
This framework thus defines four classes of metastable failures, based on whether the shock and sustaining effect amplify workload or degrade capacity.

% Under this characterization, the retry storm phenomenon is explained as a case of retry-driven workload amplification.
% The retry feedback loop generates enough load to perpetuate itself, keeping load above capacity for extended periods.
% The initial \shock{} may take the form of workload amplification ({\em e.g.}, a surge in client load) or capacity degradation ({\em e.g.}, a server slow-down).

\subsection{Shortcomings of the Current Approach}\label{sec:existing-approach}

% The current approach to metastability is largely phenomenological: it focuses on observable symptoms—namely, high latency and low goodput. Consequently, it remains tied to categories such as overload and congestion, culminating in definitions like “self-sustaining congestive collapse”~\cite{analyzing-metastable-failures}. This perspective suffers from three key shortcomings.

The prevailing approach focuses largely on observable symptoms, and remains tied to categories like overload and congestion, ultimately equating metastability with ``self-sustaining congestive collapse"~\cite{analyzing-metastable-failures}.
It suffers from three shortcomings.

\textbf{(1) No Causal Insight.} It ignores 
the underlying causes and internal dynamics of metastable failures.
If metastability were the flu, this approach treats the fever, not the virus.

In the retry storm (Figure~\ref{listing:retry}), once a \shock{} persists long enough, retries flood the server’s queue, crowding out original requests.
The server wastes cycles acknowledging retries for requests already served, effectively reducing its capacity and amplifying the retrier’s workload.
The root cause is not just the retry feedback loop and its amplification of load; it is rather a circular interplay between workload amplification and capacity degradation, shaped by ($i$) the agents’ functional composition and ($ii$) their scheduling policies, which contribute to workload amplification or capacity degradation.

\textbf{(2) Inadequate Remedies.} Without causal insight, attempted remedies are often suboptimal.
Shedding client load is the standard response, yet our analysis shows the real issue lies in scheduling.
The retrier retries aggressively every~$T$ timesteps,\footnote{Finite retries do not eliminate metastability; we assume indefinite retries for simplicity.} and switching to exponential backoff for retries eliminates the failure.
Revisiting the server’s passive scheduler is even better: prioritizing original requests over retries ({\em e.g.}, by maintaining separate queues) can eliminate metastability without the latency penalty of backoff.

\textbf{(3) Narrow Coverage.} It lacks coverage for instances of metastability that do not manifest as high latency and low goodput.
We observed several such incidents;
for example, metastability can manifest as oscillations in the number of active servers in a distributed system (\S\ref{sec:MFT}).
These diverse manifestations motivated us to move beyond the current characterization and develop the one presented in \S\ref{sec:faults}-\S\ref{sec:failures}.

% \subsection{The Oscillating Membership}\label{sec:incident-oscillating-membership}
% salam

% \textbf{Implementation.} salam

% \textbf{Demonstration.} salam

% \textbf{Malady.} salam

%% file: model.tex
\section{Model Sketch}\label{sec:model}
To reason about how components destabilize each other, we need a model that captures state changes, action effects, and how components read and write shared state.
We opt for intuition, and defer a formal model to Section~\ref{sec:formal-model} of the supplementary materials.

A system~$S=(\Sigma,\mathcal{A})$ is a state machine with a set of states~$\Sigma$, actions~$\mathcal{A}$, and variables.
States are valuations of variables.
~$S$ interacts with an \emph{environment} that, by taking actions of its own, can modify the values of some subset of the system's variables.
Environment actions are arbitrary---the system does not control the environment's behavior.
We denote state transitions of the system as~$s\rightarrow s'$, where~$s$ and~$s'$ are system states.

% A system~$S = (\Sigma_S, \mathcal{A}_S, \mathcal{V}, \mathcal{V}_E)$ is a state machine with a state space~$\Sigma_S$, a set of actions~$\mathcal{A}_S$, a set of variables~$\mathcal{V}$, and a set of variables~$\mathcal{V}_E$ such that~$\mathcal{V}_E\subseteq \mathcal{V}$---these are variables that the environment in which the system operates can modify.
% Each state is an assignment of values to the variables in~$\mathcal{V}$.
% Each state transition~$s\rightarrow^\alpha s'$ of the system, where~$s, s'\in\Sigma_S$, is accompanied by an action~$\alpha\in\mathcal{A}_S$.
% We omit the subscript~$S$ when the system is clear from the context.

A predicate~$P$ is a subset of the system's states, and an environment predicate~$E$ is a predicate that involves only the variables that the environment can touch.
Every execution of the system is a trace of states connected via state transitions, where at each transition at least one of the system and the environment take actions.
For every execution~$\sigma$ of the system, all suffixes of~$\sigma$ are also executions of the system.
%Whenever the environment has taken an action, followed by a system action, we say that the execution has taken a \emph{step}.
Given some environment predicate~$E$ and system states~$s$ and~$s'$, a transition~$s\rightarrow s'$ is an~$E$-step if~$s$ and~$s'$ satisfy~$E$.
If during a transition~$s\rightarrow s'$ only the system takes an action, say~$\alpha$, we further label the transition as~$s\rightarrow^\alpha s'$.

% Each system~$S = (\Sigma, \mathcal{A}, \mathcal{V}, \mathcal{V}_E)$ interacts with an environment that can modify the variables in~$\mathcal{V}_E$ by taking environment actions.
% Environment actions are arbitrary---the system does not control the environment's behavior.
% A predicate~$P$ is a subset of~$\Sigma$; an environment predicate~$Q$ is a predicate that involves only variables in~$\mathcal{V}_E$. 
% The system and the environment alternate actions: any execution~$\sigma$ of the system is a trace~$\sigma=s_0\rightarrow^{e_0} s_1\rightarrow^{\alpha_0} s_2\rightarrow^{e_1} s_3\rightarrow^{\alpha_1} s_4\rightarrow^{e_2}\dots$, where~$s_i\in\Sigma$ for all~$i\geq 0$,~$\{e_i\}_{i\geq 0}$ are environment actions, and~$\alpha_i\in\mathcal{A}$ for all~$i\geq 0$.
% Whenever the environment has taken an action, followed by a system action, we say that the execution has taken a \emph{step}.
% To keep notation short, we will omit mention of variables and write~$S=(\Sigma, \mathcal{A})$; it is to be inferred that the context implicitly indicates which variables the environment can modify.

If a system assigns a value to a variable following an action, we say the system \emph{writes to} the variable via the action, establishing a writes-to relation between them; if a system reads the value of a state variable with an action, we say that the system \emph{reads from} the variable via the action.

% If a system assigns a value to a variable following an action, we say the system \emph{writes to} the variable via the action, establishing a writes-to relation between them; if a system reads the value of a state variable with an action, we say that the system \emph{reads from} the variable via the action.

Given two systems~$S_1$ and~$S_2$, their composition~$S_1\parallel S_2$ is a system with the Cartesian product of~$S_1$ and~$S_2$'s states as its state space.
As for its actions, the composition takes an action when at least one of~$S_1$ and~$S_2$ take an action.
If, upon composition,~$S_1$ becomes responsible for writing to some variable of~$S_2$ that was previously written to by the environment, the environment stops writing to that variable in the composition.
We generalize this definition to compositions of more than two systems, and denote the composition of the~$n$ systems~$\{S_i\}_{0\leq i < n}$ with~$\parallel_{0\leq i < n}S_i$.
Whenever the composition takes an action during which each component~$S_i$ takes some action~$\alpha_i$ and the environment takes some action~$e$, the transition is serializable~\cite{papadimitriou1979serializability, bernstein1979serializability},~$i.e.$, the resulting state is equivalent to that resulting after \emph{some} serial execution of the actions~$\{\alpha_i\}_{0\leq i < n}$ and~$e$.

% Let~$\bot$ denote the absence of an action for any system.
% Given two systems~$S_1=(\Sigma_1, \mathcal{A}_1, \mathcal{V}_1, \mathcal{V}^1_E)$ and~$S_2=(\Sigma_2, \mathcal{A}_2, \mathcal{V}_2, \mathcal{V}^2_E)$, we define their composition as the system~$S_1\parallel S_2 = (\Sigma_1\times\Sigma_2, \mathcal{A}, \mathcal{V}_1\cup\mathcal{V}_2, \mathcal{V}^1_E\cup\mathcal{V}^2_E)$, where~$\times$ denotes the Cartesian product.
% As for~$\mathcal{A}$, assuming~$\alpha\in\mathcal{A}_1$ and~$\beta\in\mathcal{A}_2$, each action~$\gamma\in\mathcal{A}$ is of one of  ($i$)~$(\alpha, \beta)$, ($ii$)~$(\alpha, \bot)$, and ($iii$)~$(\bot, \beta)$: the composition makes a state transition whenever at least one component takes an action.
% We generalize this definition to compositions of more than two systems, and denote the composition of the~$n$ systems~$\{S_i\}_{0\leq i < n}$ with~$\parallel_{0\leq i < n}S_i$.
% Whenever the composition~$S$ takes an action~$\alpha=(\alpha_0,\dots, \alpha_{n-1})$, the action is serializable~\cite{papadimitriou1979serializability, bernstein1979serializability},~$i.e.$, the resulting state is equivalent to that resulting after \emph{some} serial execution of the actions~$\{\alpha_i\}_{0\leq i < n}$.
% In every execution of the composition~$S$, each system takes actions infinitely often.

Given a composition of~$n$ systems~$\{S_i\}_{0\leq i < n}$, we lift the writes-to relation between systems and variables to a writes-to relation between systems.
If a system~$S_i$, via some action~$\alpha_i$, writes to a state variable of~$S_i$ that system~$S_j$ reads from, or directly to a state variable of~$S_j$ that~$S_j$ reads from, we say that~$S_i$ writes to~$S_j$ via~$\alpha_i$.
This relation induces a \emph{composition blueprint}: a directed graph whose vertices are systems and whose edges represent writes-to relations between them.
If~$S_i$ writes to~$S_j$, the edge is from~$S_i$ to~$S_j$.
For a system~$S_i$, we call the set of all systems that write to it its \emph{writing neighbors}, and denote it with~$W_i$.
Similarly, we call the set of all systems that~$S_i$ writes to as its \emph{reading neighbors}, and denote it with~$R_i$.

To facilitate our proofs, which entail liveness assertions~\cite{alpern-schneider-liveness}, we assume that in every execution of a composition~$S$ ($i$) each component takes actions infinitely often, and ($ii$) during each transition at least one writing neighbor of every component takes an action.

% Given a composition of~$n$ systems~$\{S_i\}_{0\leq i < n}$, we lift the writes-to relation between systems and variables to a writes-to relation between systems.
% If a system~$S_i$, via some action~$\alpha_i\in\mathcal{A}_i$, writes to a state variable of~$S_i$ that system~$S_j$ reads from, or directly to a state variable of~$S_j$ that~$S_j$ reads from, we say that~$S_i$ writes to~$S_j$ via~$\alpha_i$.
% This relation induces a \emph{composition blueprint}: a directed graph whose vertices are systems and whose edges represent writes-to relations between them.
% If~$S_i$ writes to~$S_j$, the edge is from~$S_i$ to~$S_j$.
% For a system~$S_i$, we call the set of all systems that write to it its \emph{writing neighbors}, and denote it with~$W_i$.
% Similarly, we call the set of all systems that~$S_i$ writes to as its \emph{reading neighbors}, and denote it with~$R_i$.

%% file: faults.tex
\section{Metastable Faults}\label{sec:faults}

Dissecting the retry storm (\S\ref{sec:incident-retry-storm}) reveals two distinct contributors: the composition structure and the scheduler.
The former enables components to destabilize each other, while the latter repeatedly schedules for execution the destabilizing interactions.
We observe this separation in every incident we examined, which motivates us to characterize metastability by distinguishing  metastable {\em  faults} from metastable {\em failures}.
We focus on faults in this section; we defer the discussion of failures to \S\ref{sec:failures}.

Faults are structural vulnerabilities arising from compositions and capture the propensity for \emph{mutual destabilization} among \emph{individually stabilizing} components.
The composition of two components harbors a metastable fault if, in reaction to a \shock{}, the locally stabilizing actions of each component destabilizes the other.
Understanding metastable faults hinges on two questions: What are stabilizing systems? And are all faulty compositions of these systems about metastability? To answer these questions, we present ($i$) a notion of stabilization (\S\ref{sec:stabilizing-systems}), and ($ii$) a notion of compatibility (\S\ref{sec:compatibility}) that rules out trivially faulty compositions of stabilizing systems. We then define metastable faults (\S\ref{sec:metastable-faults}).
We illustrate these notions using the retry storm incident (\S\ref{sec:incident-retry-storm}) as an informal yet intuitive running example.
%We favor intuition, and defer formal versions of our definitions---both here and in~\S\ref{sec:failures}---to Section~\ref{sec:faults-failures-formal} of supplementary materials.  

\subsection{Stabilizing Systems}\label{sec:stabilizing-systems}

Consider a cluster manager tasked with maintaining a fixed number of active servers under a crash failure model.
A sufficiently strong \shock{}, {\em i.e.}, several workers crashing, can drive the system into a state where the cluster manager’s guarantee no longer holds.
Correctness requires the cluster manager to eventually reach a set of states with enough active workers and to stabilize there, regardless of the \shock's severity.

Such systems are known as self-stabilizing systems~\cite{dijkstra-1974-selfstabilizing, self-stabilization-dolev2000}.
Starting from an arbitrary state after a shock, a self-stabilizing system converges to a set of good states and remains there provided that there are no subsequent \shock{}s.
Inspired by this notion, we introduce the abstraction of a {\em potential function} to formalize metastable faults.
A potential function maps a system’s state to a value that quantifies its distance from the good states.
A system aiming to converge should take stabilizing actions that drive this potential to zero.
Whether every self-stabilizing system can be expressed using potential functions remains an open question for future work.

\begin{definition}[\textbf{Potential function}]\label{def:potential-function}
    For a system~$S$ with state space~$\Sigma$ and a state predicate~$G$ representing a set of good states, a function~$f:\Sigma\rightarrow\mathbb{R}_{\geq 0}$ is a \emph{potential function} for~$(S, G)$ iff:
    \begin{enumerate}[label=P\arabic*, ref=P\arabic*, left=0pt]
        \item\label{P1} for all~$s\in\Sigma$,~$f(s) = 0\Leftrightarrow s\in G$; and

        \item\label{P2} for all~$s, s'\in\Sigma$ and~$\alpha\in\mathcal{A}$, if the system makes a transition~$s\rightarrow^\alpha s'$, then~$f(s')\leq f(s)$.
    \end{enumerate}
\end{definition}

In other words, the potential function assigns zero to good states and positive values elsewhere, and every system action can only maintain or reduce this potential. The sole source of destabilization is the environment.
If the environment's destabilizing influence exceeds the system’s capacity to compensate, stabilization fails. Therefore, correctness for stabilizing systems follows an assume-guarantee form~\cite{assume-guarantee-jones1983}: starting from any state, the system will stabilize {\em as long as  the environment is well-behaved}.

\begin{definition}[\textbf{Stabilizing system}]\label{def:stabilizing-system}
    For a system~$S$, a predicate~$G$, and a potential function~$f$ for the pair~$(S, G)$, the pair~$(S, f)$ is \emph{stabilizing} iff there exists an environment predicate~$E$ such that, as long as the environment takes actions such that the system state repeatedly satisfies~$E$, eventually the system state also satisfies~$f = 0$, and keeps satisfying~$f=0$ as long as environment actions keep maintaining~$E$.
\end{definition}

A stabilizing system need not satisfy~$f=0$ initially. It suffices that, if the environment behaves well for long enough, there exist a time after which~$f=0$ holds as long as the environment keeps behaving well.
%the system has no way of causally predicting an environment misbehavior in the future, and therefore continues to satisfy~$f=0$ for one more execution step.
Inspired by the~$\rightarrow^+$ temporal modality of Abadi and Lamport~\cite{Abadi-Lamport}, we use the symbol~$\rightsquigarrow^+$ to denote our modality of choice: a pair~$(S, f)$ is stabilizing iff there exists an~$E$ such that~$E\rightsquigarrow^+ f = 0$.
We present a formal semantics for~$\rightsquigarrow^+$ in Section~\ref{sec:squiggly-semantics} of the supplementary materials.
% We defer a rigorous semantics for the temporal operator~$\rightsquigarrow^+$ to Appendix~\ref{sec:appendix-formal}.

\textbf{Running Example.} Our retry storm implementation (\S\ref{sec:incident-retry-storm}) naturally decomposes into two systems: the server,~$S_1 = (\Sigma_1, \mathcal{A}_1)$, and the retrier,~$S_2 = (\Sigma_2, \mathcal{A}_2)$. 
%We need to model them in a way that reveals their stabilizing nature.
For~$S_1$, a natural $\Sigma_1$ is the state space of the server’s queue, {\em i.e.}, each state is an ordering of requests and retries.
We pick one execution of the task \texttt{serve} as the only action for the server:~$\mathcal{A}_1 = \{\texttt{serve}\}$.
This action processes~$\min\{s,\ Q\}$ items per step ($s$ is service capacity,~$Q$ the queue size) and sends acknowledgments.
The environment injects requests and retries. Since the server repeats \texttt{serve} to drain load, a natural potential is~$f_1 = \max\{0, Q - s\}$,~$i.e.$, the overload on the server.
Choosing~$f_1 = Q$ would prevent stabilization at zero, because new requests always arrive.
$f_1$ is a potential function, as \texttt{serve} never increases it; only the environment's actions do so.%\af{Explain how this is connected to the code.}

For $S_2$, a natural $\Sigma_2$ consists of pending requests and their timers.
It aims to converge to a state where the only client requests pending are the newest.
We thus define $f_2 = \max\{0,\ P - P_{\text{threshold}}\}$, where $P$ is the number of pending requests and $P_{\text{threshold}}$ is a cutoff.~\footnote{Our analysis shows that~$P_{\text{threshold}} = (s - r)\cdot T$ is a suitable pick, with $r$ the nominal client load. Details are omitted for brevity.}

As for the retrier’s actions, we split the task \texttt{manage} into two parts, one adding requests to \texttt{pending} and one removing requests from \texttt{pending}.
We group the latter and the task \texttt{retry} into one action:~$\mathcal{A}_2 = \{\texttt{remove-n-retry}\}$.
This action never increases $f_2$: it removes acknowledged requests, updates timers for the ones in~\texttt{pending}, and retries timed-out ones.
Besides submitting acknowledgments, the environment takes care of adding new requests to \texttt{pending}, thus increasing~$f_2$.
 
%A natural pick for~$\Sigma_2$ is the set of all pending requests along with their corresponding timer values.
%It is trying to converge to a state where only the newest client requests are pending.
%It is thus intuitive to have~$f_2 = \max\{0, P - P_{threshold}\}$---$P$ is the number of requests in \texttt{pending} and~$P_{threshold}$ is some cutoff threshold.\footnote{We show in Appendix~\ref{sec:appendix-retry-storm-equations} that~$P_{threshold} = (s-r)\cdot T$ is a suitable pick, where~$r$ is the nominal client load.}
%Since no action of the retrier is allowed to increase~$f_2$, we pick~$\mathcal{A}_2 = \{\texttt{remove-n-retry}\}$.
%This action removes from~$\texttt{pending}$ all requests for which the retrier has received an acknowledgment in this step, updates the timers for all requests in~$\texttt{pending}$, and sends retries for the ones that timeout.
%The environment for the retrier in turn takes care of giving it client requests and adding them to \texttt{pending} ($i.e.$, increasing~$f_2$), and also giving it acknowledgments.
% Note how our model of the retrier delegates some of the functionality in its software (Figure~\ref{listing:retry}) to the environment.

To show that the server and the retrier are stabilizing, we have to determine environment predicates~$E_1$ and~$E_2$ such that~$E_1\rightsquigarrow^+ f_1 = 0$ and~$E_2\rightsquigarrow^+ f_2 = 0$,~$i.e.$, environments under which server and retrier stabilize.
If server's environment~$E_1$ always supplies fewer than~$s$ requests and retries combined per step, the service rate exceeds the input rate and the server will eventually reach~$f_1 = 0$.
For the retrier, if~$E_2$ provides more acknowledgments than new client requests per step, the retrier will remove pending requests faster than they accumulate, driving~$f_2$ to zero.

\subsection{Compatibility}\label{sec:compatibility}

When composing stabilizing systems, it must be {\em possible} for them to stabilize together.
We capture this sanity check with the notion of \emph{compatibility}: a set of stabilizing systems are compatible if, assuming that each system's writing neighbors have stabilized, the system itself can also stabilize under a well-behaved environment.
\begin{definition}[\textbf{Compatibility}]\label{def:compatibility}
    The stabilizing systems $(S_1, f_1)$,~$(S_2, f_2)$,~$\dots$, and~$(S_n, f_n)$ are \emph{compatible} if there exists an environment predicate~$E$ such that the following holds for~$\parallel_{0\leq i < n} S_i$, for all~$0\leq j < n$:
    \begin{enumerate}[label=C\arabic*, ref=C\arabic*, left=0pt]
        \item\label{C1}~$E\land (\bigwedge_{i\in W_j} f_i = 0)\rightsquigarrow^+ f_j = 0$. 
    
        % \item\label{C1}~$E\land f_1 = 0\rightsquigarrow^+ f_2 = 0$; and

        % \item\label{C2}~$E\land f_2 = 0\rightsquigarrow^+ f_1 = 0$.
    \end{enumerate}
    We call~$E$ a \emph{compatible environment} for the composition.
\end{definition}

Compatibility rules out trivially faulty compositions.
If two systems cannot assist each other in stabilizing, they are not meant to be composed.
We therefore define metastable faults only for compositions of compatible systems.

\textbf{Running Example.} Consider our composition of the server with the retrier: the retrier forwards both client requests and retries to the server, while the server sends acknowledgments back to the retrier.
The environment supplies the retrier with client requests; suppose it stops providing new requests.
Then the server eventually drains its queue and sends the correponding acknowledgments, leaving the retrier with no pending requests. 
If we denote this environment~$E$, we have~$E\rightsquigarrow^+ f_1 = 0$ and~$E\rightsquigarrow^+ f_2 = 0$, which implies~$E\land f_2 = 0\rightsquigarrow^+ f_1 = 0$ and~$E\land f_1 = 0\rightsquigarrow^+ f_2 = 0$.
Since the server's and the retrier's writing neighbors are~$W_1 = \{2\}$ and~$W_2 = \{1\}$, respectively, the retrier and the server are compatible; they \emph{can} stabilize together in the presence of a well-behaved environment.

\subsection{Enter Faults}\label{sec:metastable-faults}

Compatibility alone does not guarantee joint stabilization.
It suffers from a bootstrap problem: following a \shock{}, all systems may start in arbitrary states, with none stabilized.
Because stabilization in each system depends on their neighbors effecting a well-behaved environment, the composition may fail to stabilize if no component initially behaves as needed.
Indeed, a system $S_i$ might destabilize its reading neighbors while waiting for its writing neighbors to stop destabilizing {\em it}.
When such dependencies form cycles, the system becomes vulnerable to mutual destabilization, even as each component attempts to stabilize.
Metastable faults capture this emergent cyclic vulnerability.
In order to define a metastable fault, we first need to define a destabilizing action.

\begin{definition}[\textbf{Destabilizing action}]
    Let~$(S_1, f_1)$ and~$(S_2, f_2)$ be two stabilizing systems in the composition~$S$ of some compatible stabilizing systems, where~$\Sigma_2$ is~$S_2$'s state space.
    Let~$\alpha_1$ be an action with which~$S_1$ writes to~$S_2$, and let~$s, s'\in\Sigma_2$.
    The action~$\alpha_1$ is destabilizing at some state~$s\in\Sigma_2$ iff there exists a compatible environment~$E$ for~$S$ and an~$E$-step~$s\rightarrow^\alpha s'$ such that either~$f_2(s') \geq f_2(s) > 0$ or~$f_2(s') > f_2(s) = 0$.
    
    % Let~$(S_1, f_1)$ and~$(S_2, f_2)$ be two stabilizing systems in the composition~$S$ of some compatible stabilizing systems, where~$\Sigma_2$ is~$S_2$'s state space.
    % Let~$\alpha_1$ be an action with which~$S_1$ writes to~$S_2$, and let~$s, s', s''\in\Sigma_2$.
    % The action~$\alpha_1$ is destabilizing at some state~$s\in\Sigma_2$ iff there exists a compatible environment~$E$ for~$S$ and an~$E$-step~$s\rightarrow^e s'\rightarrow^{\alpha_1} s''$ such that either~$f_2(s'') \geq f_2(s) > 0$ or~$f_2(s'') > f_2(s) = 0$.
\end{definition}

Informally, an action is destabilizing iff it can \emph{possibly} increase or maintain potential somewhere in the system, even in the presence of a compatible environment.

\begin{definition}[\textbf{Metastable fault}]\label{def:metastable-fault}
    Given compatible stabilizing systems~$(S_0, f_0)$,~$\dots$,~$(S_{n-1}, f_{n-1})$, their composition~$S = \parallel_{0\leq i < n}S_i$ \emph{has a metastable fault} iff there exists a cycle of systems in the composition blueprint, such that for all systems in the cycle:
    \begin{enumerate}[label=M\arabic*, ref=M\arabic*, left=0pt]
        \item\label{M1} (\textbf{Writes-to}) it writes to the next system in the cycle via some action; and

        \item\label{M2} (\textbf{Destabilization}) said action is destabilizing at some state of the next system in the cycle.
    \end{enumerate}
\end{definition}

% \begin{definition}[\textbf{Metastable fault}]\label{def:metastable-fault}
%     Let~$i\oplus 1$ denote $i + 1\ \texttt{mod}\ k$ for~$0\leq i\leq k-1$ and some~$k$.
%     Given compatible stabilizing systems~$(S_0, f_0)$,~$\dots$,~$(S_{n-1}, f_{n-1})$, where $S_i = (\Sigma_i, \mathcal{A}_i)$, their composition~$S = \bigwedge_{0\leq i < n}S_i$ has a \emph{metastable fault} if there exists a subset~$M = \{S_{i_0}, \dots, S_{i_{k-1}}\}$ of the systems, and an action~$\alpha_{i_j}\in\mathcal{A}_{i_j}$ for~$0\leq j\leq k-1$, such that for all~$0\leq j < k$:
%     \begin{enumerate}[label=M\arabic*, ref=M\arabic*, left=0pt]
%         \item\label{M1} (\textbf{Cyclicity}) each~$S_{i_j}$ writes to~$S_{i_{j\oplus 1}}$ via~$\alpha_{i_j}$; and

%         \item\label{M2} (\textbf{Destabilization}) there exists a transition of~$S$, where only~$S_{i_j}$ takes an action, which is $\alpha_{i_j}$, changing~$S_{i_{j\oplus 1}}$'s state from~$s$ to~$s'$ for~$s, s'\in\Sigma_{i_{j\oplus 1}}$, such that~$f_{i_{j\oplus 1}}(s')\geq f_{i_{j\oplus 1}}(s)$.
        
%         % \item\label{M2} (\textbf{Destabilization}) there exists a transition of~$S$, where~$S_{i_j}$ takes action~$\alpha_{i_j}$ and~$S_{i_{j\oplus 1}}$ makes a transition from~$s$ to~$s'$ for~$s, s'\in\Sigma_{i_{j\oplus 1}}$, such that~$f_{i_{j\oplus 1}}(s')\geq f_{i_{j\oplus 1}}(s)$.
%     \end{enumerate}
% \end{definition}
Thus, the composition of compatible stabilizing systems has a metastable fault iff ($i$) there exists a \emph{writes-to cycle} among them, and ($ii$) each system in the cycle has an action that,  when taken in the presence of some compatible environment, either increases the next system's potential when it is zero, or does not decrease it when it is positive.

Locating such faults does not require a global analysis; it replaces reasoning about the entire composition with local, decompositional reasoning about pairs of components.
It suffices, for each component, to identify actions that fail to stabilize another given a compatible environment.
This involves, for all compatible environments, executing each action from an arbitrary starting state and verifying whether it is destabilizing.
A model checker can automate these pairwise checks, after which the designer can determine whether a cycle exists.

% Locating such faults does not require a causal analysis. It suffices, for each component, to identify actions that fail to stabilize another. This involves executing each action from an arbitrary starting state and verifying whether the reader's potential decreases. A model checker can automate these pairwise checks, after which the designer can determine whether a cycle exists.

\textbf{Running Example.} The retrier–server composition contains a writes-to cycle.
The retrier's action~$\texttt{remove-n-retry}$ never decreases the server's potential~$f_1$, and may even increase it. Similarly, the server's action~$\texttt{serve}$ can fail to decrease the retrier's potential~$f_2$ when it sends only useless acknowledgments---{\em e.g.}, when serving retries instead of new requests.
Thus, the composition exhibits a metastable fault.

%% file: failures.tex
\section{Metastable Failures}\label{sec:failures}

% Failures happen when the schedulers fan the fault into a self-sustaining loop where the scheduler prioritizes the destabilizing interactions, which in turn invoke the scheduler to make more bad choices.

A metastable failure occurs when a composition of compatible stabilizing systems fails to stabilize under a compatible environment.
In other words, some component experiences positive potential infinitely often despite a well-behaved environment and despite its own stabilizing tendency to converge to zero potential.
A concise way to express this is via temporal logic~\cite{temporal-logic-pnueli1977, manna1995temporal}, where~$\Box$ and~$\Diamond$ denote the \texttt{always} and \texttt{eventually} modalities.
Combined as~$\Box\Diamond P$ for some predicate~$P$, they assert that~$P$ happens always eventually,~{\em i.e.}, infinitely often.

\begin{definition}[\textbf{Metastable failure}]
    The composition~$S$ of the compatible stabilizing systems~$(S_0, f_0)$,~$(S_1, f_1)$,~$\dots$, and~$(S_{n-1}, f_{n-1})$ has a \emph{metastable failure} iff there exists a compatible environment~$E$ such that the composition admits an execution~$\sigma$ satisfying~$\Box E\land \Box\Diamond\neg(\bigwedge_{0\leq i < n}f_i = 0)$,~$i.e.$, an execution wherein the environment is always compatible but the system experiences positive potential infinitely often.
    We call~$\sigma$ a \emph{metastable execution} of~$S$.
\end{definition}

Metastable faults do not {\em necessarily} imply metastable failures,~$i.e.$, faults are not sufficient for failures.
Even as faults show structural destabilizing tendencies between components, compatibility implies the presence also of stabilizing tendencies: whether a potential will remain positive infinitely often depends on the {\em ordering} of these tendencies during executions.
Queues, the OS, and timers are among different parts of a computer system that affect ordering of events; we refer collectively to all such parts of the system that effect such orderings as the \emph{scheduler}.

If the scheduler favors destabilizing actions, potential will stay positive, triggering more destabilizing events and bad scheduling choices in a self-reinforcing loop.
A metastable failure will eventually emerge from this cycle of destabilizing actions and poor scheduling.
Conversely, if stabilizing actions are prioritized for {\em long enough}, compatibility will help drive potential to zero.
Once one system stabilizes, by compatibility it will in turn help stabilize its reading neighbors, until every component's potential reaches zero.
At that point, postponed destabilizing actions cease to exist---as we see in the example below---as compatibility implies that components do not destabilize each other in stability; all components, therefore, stabilize at zero potential.

\textbf{Running Example.} The pseudo-code in Figure~\ref{listing:retry} shows the scheduler's role in driving the system toward a metastable failure.
The retrier issues retries every $T$ timesteps, while the server places both retries and requests in a single queue and serves them in arrival order.
If a \shock{} overloads the server for long enough, the retrier's scheduling policy will cause it to flood the server with retries of pending requests.
As retries come to dominate the queue, the server increasingly wastes its resources by serving only old retries for which it has already responded to the original request.
The retrier’s aggressive scheduler amplifies load; the server’s passive scheduler lets this amplification erode effective capacity.
That degraded capacity starves the retrier’s pending requests, triggering yet another wave of retries and deepening the overload.

% In fact, as we show in Appendix~\ref{sec:appendix-retry-storm-equations}, the number of retries it sends increases linearly with the number of requests.

% Figure~\ref{fig:retry-storm-metastable-failure} illustrates this destructive interplay between workload and capacity, which is a function both of the composition and the schedulers.
% It contains both the stabilizing and the destabilizing tendencies, where the stabilizing ones arise from the compatibility between the two systems.

% \begin{figure}[tb]
%     \centering
%     \includegraphics[width=0.8\linewidth]{figures/retry-storm-metastable-failure.pdf}
%     \caption{The metastable failure for the retry storm. The red and green plus signs indicate the possibilities of destabilization and stabilization, respectively. The red arrow shows the coupling between workload amplification and capacity degradation that arises from the schedulers. We have included the server's stabilizing tendency for itself---as an edge from the server to itself---to show all the tendencies at play.}
%     \label{fig:retry-storm-metastable-failure}
% \end{figure}

A metastable failure can be avoided by changing how components schedule their actions. For example, the server could maintain two queues---one for retries and one for requests---prioritizing requests until enough acknowledgments are sent to the retrier, then switching to retries.
This ensures the retrier sees more acknowledgments than new requests, eventually driving the potential to zero; at that point, retries cease to exist as all pending requests have received acknowledgments, and all incoming requests will be served before timing out.
Alternatively, the retrier could employ exponential backoff, increasing its timeout until it pauses long enough for the server to clear its backlog and send pending acknowledgments. Either remedy allows the components to act for each other as the stabilizing environment each needs.

%A metastable failure can be avoided by changing how the components schedule their actions.
%For instance, the server can have two queues instead of one; one queue for the retries and one for the requests.
%It can then prioritize the requests over the retries, and serve just enough requests to send more acknowledgments to the retrier than client requests; it can then switch to serving retries.
%Then the retrier will receive more acknowledgments than requests, leading eventually to zero potential.
%Another remedy is for the retrier to change its scheduler by using, for instance, exponential backoff.
%Eventually, the timeout value becomes large enough that the retrier will be silent for long enough for the server to work through its backlog, and thus send all pending acknowledgments to the retrier, reducing both of their potentials to zero.
%Picking any one of these schedulers enables the components to act as the helpful environment the other needs to stabilize.

\subsection{Faults Are Necessary For Failures}\label{sec:necessary-conditions}

Our analysis establishes a fundamental relationship between metastable faults and metastable failures: while metastable faults are not sufficient to induce metastable failures, they are a necessary condition.
Specifically, if a composition of compatible stabilizing systems exhibits a metastable failure, then the composition has a metastable fault.
Although it may seem intuitive that a metastable failure cannot arise without faults, this is not automatic from our definitions.
We state this observation as a theorem together with a proof sketch here, deferring a rigorous proof to Section~\ref{sec:proofs} of supplementary materials.

\begin{restatable}{theorem}{faultsNecessary}\label{thm:faults-necessary}
    Let~$(S_0, f_0)$, $(S_1, f_1)$,~$\dots$, and~$(S_{n-1}, f_{n-1})$ be compatible stabilizing systems, and~$S=\parallel_{0\leq i < n}S_i$ their composition.
    If~$S$ has a metastable failure, then it has a metastable fault.
\end{restatable}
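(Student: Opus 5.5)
We argue by contraposition: assuming the composition $S$ has no metastable fault, we show that under every compatible environment $E$ every execution satisfying $\Box E$ eventually satisfies $\bigwedge_i f_i = 0$ forever, so no metastable execution exists.

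First, restrict the composition blueprint to the ``bad'' edges. Call an edge $S_i \to S_j$ \emph{destabilizing} if some action by which $S_i$ writes to $S_j$ is destabilizing at some state of $S_j$ (with respect to some compatible environment). Absence of a metastable fault means the subgraph of destabilizing edges has no cycle, i.e., it is a DAG. Fix a compatible environment $E$ and an execution $\sigma$ with $\Box E$. The plan is an induction along a topological order of this DAG, strengthened to also cover the non-destabilizing edges.

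The key local observation comes from negating the definition of a destabilizing action. If an edge $S_i\to S_j$ is not destabilizing, then in every $E$-step each writing action of $S_i$ strictly decreases $f_j$ when $f_j>0$, and keeps $f_j=0$ when $f_j=0$. Hence $S_i$ behaves, for $S_j$, exactly as a writing neighbor that has already stabilized would. Put differently, it is at least as helpful to $S_j$ as the hypothesis $f_i=0$ in \ref{C1}.

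The induction proceeds as follows. Consider a system $S_j$ all of whose destabilizing in-neighbors have already been shown to satisfy $f_i=0$ from some point on. Its other writing neighbors are non-destabilizing, so by the observation above they never push $f_j$ up from $0$ and strictly decrease it while it is positive. We then conclude $f_j=0$ eventually and forever. There are two routes to this conclusion. One route is to appeal to \ref{C1}, arguing that the suffix of $\sigma$ is indistinguishable, from $S_j$'s viewpoint, from one in which all of $W_j$ sits at zero potential. Suffixes of executions are executions, which lets us restart the argument. The other, more direct route combines \ref{P2} (the system's own actions never raise $f_j$), the strict decrease from non-destabilizing writers, and the liveness assumption that some writing neighbor acts at every transition, so that $f_j$ keeps decreasing. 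Sources of the DAG (no destabilizing in-edges) form the base case. Since the system is finite, after finitely many stages all $f_j=0$, and they stay zero forever because \ref{P2} and the non-destabilizing/stabilized neighbors cannot raise them. This contradicts $\Box\Diamond\neg\bigwedge_i f_i=0$.

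The main obstacle is the step of $S_j$ reaching zero. A strict decrease at each step does not by itself yield reaching $0$ in $\mathbb{R}_{\ge 0}$, since values could converge to a positive limit. The justification of \ref{C1} also requires that the semantics of $\rightsquigarrow^+$ only depends on the hypotheses holding from some point on. I would try first to lean on \ref{C1} with the formal semantics from the supplementary material, treating ``$f_i=0$ for stabilized destabilizing neighbors, and non-destabilizing for the rest'' as implying the premise of \ref{C1} on a suffix. If that fails, I would fall back on a discreteness or well-foundedness assumption on potential values.
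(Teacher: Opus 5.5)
Your contrapositive framing matches the paper, as does the observation that without a fault the destabilizing edges form a DAG with sources. The gap is in your inductive step, at a node $S_j$ with two kinds of writers: destabilizing neighbors that have already reached $f_i=0$, and non-destabilizing neighbors that need not have. Such nodes can occur, because your topological order does not constrain non-destabilizing edges. For example, take $W_A=\emptyset$, with $A\to C$ and $C\to B$ destabilizing and $B\to C$ not.

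\emph{The direct route fails at such a node.} $f_A=0$ says nothing about how $A$'s writes affect $f_C$. A stabilized but destabilizing $A$ may still keep $f_C$ positive or raise it, so you lose the per-step strict decrease.

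\emph{The C1 route fails too.} The premise of C1 for $C$ is literally $f_A=0\land f_B=0$. But $f_B=0$ is only available via C1 for $B$, whose premise is $f_C=0$, so the argument is circular. Nothing in the definitions lets you replace ``$f_i=0$'' by ``$S_i$'s writes are non-destabilizing''. C1 constrains only executions in which its premise actually holds, so ``at least as helpful'' and ``indistinguishable from $S_j$'s viewpoint'' have no formal content. For the same reason, your closing claim that potentials stay at zero because stabilized neighbors ``cannot raise them'' is unsupported. Persistence has to come from the unless-clause in the semantics of $\rightsquigarrow^+$.

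The paper never has to handle such a node, because it inducts on the number of components.
\begin{itemize}
\item Only the sources $J$ of the destabilizing DAG are treated directly: by C1 when $W_j=\emptyset$, and by descent when all writers are non-destabilizing. This gives $E\rightsquigarrow^+\bigwedge_{j\in J}f_j=0$.
\item Lemma~\ref{lemma:compatible} then makes $E\land\bigwedge_{j\in J}f_j=0$ a compatible environment for the remaining components. The stabilized components are reclassified as environment rather than writing neighbors.
\item The induction hypothesis is applied to this smaller fault-free composition as a black box.
\item Lemma~\ref{lem:squiggly-tastes-funny} glues the two $\rightsquigarrow^+$ statements. This is the precise form of your ``restart on a suffix'' idea.
\end{itemize}
In the example, $C$ simply becomes a source of $\{B,C\}$. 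This relocation does lean on the paper's one-line claim that the sub-composition inherits fault-freedom after the reclassification.

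Your worry about strict descent in $\mathbb{R}_{\geq 0}$ is legitimate, and the paper does not resolve it either: it just asserts that $f_{j''}$ ``will eventually decrease to 0''. A discreteness or well-foundedness assumption is implicitly needed in either route.
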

\begin{proofSketch}
    We prove the contrapositive by induction on~$n$, where we inductively assume that the claim holds for the composition of any~$k < n$ compatible stabilizing systems.
    For the base case~$n=1$, the claim holds trivially.
    Now for~$n > 1$, let~$E$ be any compatible environment for the composition~$S$ of compatible stabilizing systems~$(S_0, f_0)$, $(S_1, f_1)$,~$\dots$, and~$(S_{n-1}, f_{n-1})$.
    If~$S$ has no metastable faults, then there should exist some component~$S_j$ such that no matter what actions its writing neighbors take in the presence of~$E$, they reduce~$S_j$'s potential.
    Let~$J$ be the index set of all such components.
    Since any joint action of components in~$W_j$ is serializable, and~$S_j$ does not increase its own potential, we conclude that~$f_j$ will eventually reach 0 and stay there,~$i.e.$,~$E\rightsquigarrow^+ f_j = 0$.
    This implies~$E\rightsquigarrow^+ \bigwedge_{j\in J}f_j = 0$.
    Now, consider the system with indices not in~$J$:~$E\land (\bigwedge_{j\in J} f_j = 0)$ is a compatible environment for the remaining composition.
    We thus have~$E\land (\bigwedge_{j\in J} f_j = 0)\rightsquigarrow^+ \bigwedge_{j\in[n]\backslash J} f_j = 0$.
    Putting this besides~$E\rightsquigarrow^+ \bigwedge_{j\in J} f_j = 0$, we deduce~$E\rightsquigarrow^+ \bigwedge_{0\leq j < n}f_j = 0$, {\em i.e.}, the composition does not infinitely often experience positive potential in the presence of a compatible environment.
    \qed
\end{proofSketch}

This result generalizes prior work showing that cycles in the composition blueprint---{\em i.e.}, feedback loops---are necessary for metastable failures~\cite{hotnets2025}. Our contribution builds on this insight by introducing an automatically verifiable criterion for excluding metastable failures: construct the composition blueprint and check for cycles of destabilizing actions. If no such cycles are present, then, under the assumed potential model, the system is not susceptible to metastable failures.

The proof above relies critically on the assumption that joint actions are serializable. Without this assumption, the potential of a component becomes ill-defined when neighboring components perform concurrent writes. The harmful effect of circular destabilization emerges only when actions have well-defined and isolated impacts on the potential of individual components.

%% file: macroscopics.tex
\section{Achieving Metastable Fault Tolerance}\label{sec:methodology}

The previous sections explain metastability using abstractions, yet real failures occur in production systems with tens of  thousands of lines of code. Bridging this gap is challenging and demands deep, cross-stack expertise, as some faults couple together components from across the stack.
Guided by the goal of analyzing metastability in production code, we take a modest first step by sketching a design methodology for building MFT systems, and apply it to a modeled version of a novel incident (\S\ref{sec:MFT}).
The methodology revolves around proving that a system is MFT: it ($i$) extracts from the system the object of the proof, called the \emph{metastability skeleton}, thereby also eliminating boilerplate code, ($ii$) introduces a way to pick and test candidate potential functions, ($iii$) augments the skeleton with suitable scheduling mechanism and policy, and ($iv$) outlines a proof blueprint.
%This methodology offers three benefits: ($i$) it simplifies modeling by eliminating boilerplate code; ($ii$) it enables augmenting a system with the right scheduling to achieve metastable fault tolerance without altering system structure; and ($iii$) it provides a generic rubric for formal reasoning about metastable fault tolerance.
Our methodology draws on experience studying diverse metastable failures, and we have developed the \dsl{} toolkit\footnote{Link removed for anonymity.} to support it.

\textbf{Step 1: Derive a metastability skeleton.} In practice, components destabilize one another by exchanging different types of requests.
Upon receiving a request, a component undergoes a sequence of state transitions---possibly altering its potential---and may in turn issue requests to other components.
The occurrence of a failure depends on three factors: ($i$) the scheduling of in-flight requests, ($ii$) the scheduling of internal state transitions, and ($iii$) the causal relationships among different request types. An operational model that captures these aspects enables us to isolate what matters for metastability while abstracting away irrelevant system details in the interest of simplicity and tractability.
We refer to this model as the \emph{metastability skeleton}.

% Metastable failures happen for two main reasons: components can destabilize each other, and the scheduler prioritizes the destabilizing actions  over the stabilizing ones.
% An operational model thus requires communication between the components, and also a way to schedule communication and internal state transitions.

Our operational model of choice uses queues and resources to implement scheduling; queues schedule requests upon their arrival at the destination, while resources control the scheduling of processes and threads. 
The skeleton also requires minimal logic for generating and linking different types of requests,~$e.g.$, upon receiving a request of type \texttt{client}, a server generates a request of type \texttt{ack} and sends it back.
Queue capacities, resource availability, and timers are especially important in this model, as they govern the flow of requests around the system.

The \dsl{} DSL (\S\ref{sec:nimbus-primer}) exposes these abstractions directly:
it supports message passing between agents, provides input and output queues for each agent, enables task scheduling based on resource availability, and offers common programming constructs ($e.g.$, while loops) to implement control logic.
By making scheduling of messages and tasks explicit, \dsl{} pools together parts of a computer system that are usually not present within the same codebase,~$e.g.$, it enables the designer to include in the \dsl{} model some details of OS scheduling that actual production code is oblivious to.
Figure~\ref{listing:retry} in~\S\ref{sec:incident-retry-storm} is an example \dsl{} syntax, demonstrating both application logic and task scheduling in an intuitive way.

\textbf{Step 2: Study dynamics.} Metastable failures are, after all, one among many dynamic behaviors of systems; thus, any method that exposes system dynamics should, in principle, reveal metastable failures—even without explicitly labeling them as such.
Therefore, a lack of detailed insight into faults does not preclude the ability of visualizing failures at design time, and thus alerting the designer of a potential fault.

One particularly effective method is the use of \emph{vector fields}~\cite{analyzing-metastable-failures}.
A vector field represents the system’s tendency at each state---relative to some state function---using a vector. For example, in systems modeled by differential equations, the vector indicates the direction and magnitude of the derivative at each point in the state space. By illustrating dynamic tendencies from arbitrary states, vector fields help analyze behavior after a \shock{} perturbs the system.

%A vector field depicts the system's tendency at each state---with respect to some function of the state---using a vector.
%For instance, if the system is expressed using differential equations, then the vector demonstrates the direction and the magnitude of the derivative at each point in the state space.
%By demonstrating dynamic tendencies starting from any state, vector fields make it easier to study a system's behavior after a \shock{} puts it in an arbitrary state.
%By demonstrating the dynamic tendencies starting from any state, vector fields encompass various \shock{}s \af{Change this.}.

It is neither necessary nor feasible to use the full system's state---with queue contents, local variables, and in-flight requests---as the vector field state space. Instead, our vector field represents how the system's {\em potential} evolves as the system interacts with a compatible environment post-\shock{}.  
This narrower focus  reduces dramatically the size of the space, enabling a 2D or 3D representation of the vector field.  Within this reduced space, a metastable failure manifests as two coexisting tendencies: one stabilizing at zero and another diverging toward positive potential. Designers can thus predict metastable failures by inspecting the vector field.

% The resulting behavior depends on the incident; in the retry storm, it manifests as a blow-up in potential, while in the case study in \S\ref{sec:MFT} it manifests as oscillation.

The challenge from a modeling perspective lies in identifying suitable candidates for potential. The potential function for a stabilizing component should derive from the component's specified goal: for example, a server aims to serve load, so excess queue load is a natural candidate; a load balancer seeks to distribute load evenly, so any measure of imbalance is appropriate. Definition~\ref{def:potential-function} further simplifies this task by requiring that a component’s actions do not increase its potential; if they do, the component should be divided into smaller units, each with an appropriate potential function.

Similar to Metafor~\cite{analyzing-metastable-failures}, the \dsl{} toolkit provides a \texttt{canvas} mode that enables visualizing a system's vector field---but with an important difference.
Metafor derives its vector fields by modeling systems as continuous-time Markov chains, implying that the system has no memory (in the statistical sense).
Our experience suggests instead that systems experiencing metastable failures have memory, and in fact that memory plays an important role, because it affects scheduling.
Thus, \dsl{} generates its vector field using the system's executions as its source of ground truth, alleviating the need for explicit mathematical modeling.
Designers have to annotate the vector field state variables in the \dsl{} code expressing their system.
Then, \dsl{} ($i$) executes multiple system runs starting with a random \shock{}, ($ii$) records for each state~$s$ the set~$\texttt{next}(s)$ of states reached immediately after~$s$ across all runs, and ($iii$) computes the system’s tendency at~$s$ as the average of all states in~$\texttt{next}(s)$, represented by a vector from~$s$ to this average.
\dsl{} uses lock-step execution semantics to capture dynamic tendencies, mirroring the assumption, implicit in differential equations, that variables update simultaneously based on current values.
Additionally, \dsl{} allows designers to specify the \shock{} using programming constructs,~$e.g.$, using the keyword \texttt{inject}, the designer can insert extra requests in a server's queue.

%Unlike existing work---Metafor~\cite{analyzing-metastable-failures}---\dsl{} avoids relying on a mathematical model of the system , and takes the system's executions as its source of ground truth.
%Metafor generates vector fields by first modeling a system using a continuous-time Markov chain, which implies that the system has no memory (in the statistical sense).
%While helpful, our experience shows that systems experiencing metastable failures have memory, and in fact memory plays an important role because it affects scheduling.

Figure~\ref{fig:retry-storm-vector-field} illustrates how this visualization can guide designers in identifying the failure for the retry storm example (\S\ref{sec:incident-retry-storm}); the x-axis represents the retrier’s potential (pending requests) and the y-axis represents the server’s potential (queue size).
Assuming a maximum queue capacity of 500 requests, the system exhibits, as expected, two coexisting yet conflicting tendencies: one toward zero and another diverging toward saturation and ever-increasing pending requests.

\begin{figure}[t]
    \centering
    \includegraphics[width=\columnwidth]{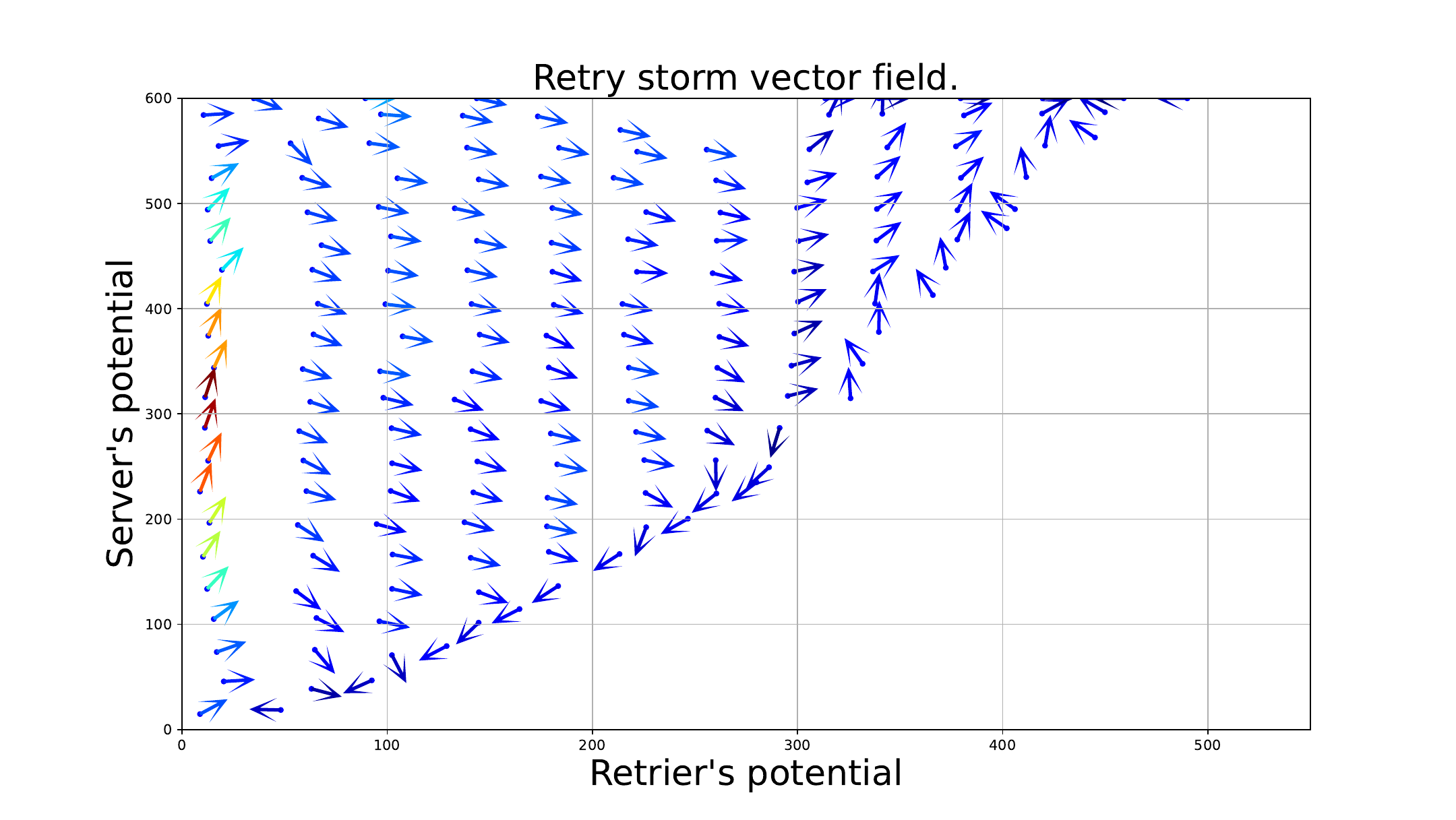}
    \caption{The \dsl{} vector field for the retry storm. Arrow color indicates tendency intensity at that point; warmer means higher intensity.}
    \label{fig:retry-storm-vector-field}
\end{figure}

% \begin{figure*}[t]
%     \centering
%     \includegraphics[width=\textwidth]{figures/retry-storm-join-vector-field.pdf}
%     % \begin{subfigure}[b]{0.45\textwidth}
%     %     \centering
%     %     \includegraphics[width=\columnwidth]{figures/retry-storm-vector-field.pdf} % Replace with your image
%     %     \rule{0.8\textwidth}{2cm} % Placeholder rule
%     %     %\caption{No backoff.}
%     %     %\label{fig:A}
%     % \end{subfigure}
%     % \hfill
%     % \begin{subfigure}[b]{0.45\textwidth}
%     %     \centering
%     %     \includegraphics[width=\columnwidth]{figures/retry-storm-exp-backoff-vector-field.pdf} % Replace with your image
%     %     \rule{0.8\textwidth}{2cm} % Placeholder rule
%     %     %\caption{Exponential backoff.}
%     %     %\label{fig:B}
%     % \end{subfigure}
%     %\caption{\dsl{} vector fields for the retry storm. The color of an arrow indicates the intensity of the tendency at that point.}
%     \label{fig:retry-storm-vector-field}
  
%     % \subfloat[No backoff.]{\includegraphics[width=0.45\columnwidth]{figures/retry-storm-vector-field.pdf}\label{fig:retry-storm-no-backoff--vector-field}}
%     % %\hspace{5mm}
%     % \subfloat[Exponential backoff.]{\includegraphics[width=0.47\columnwidth]{figures/retry-storm-exp-backoff-vector-field.pdf}\label{fig:retry-storm-exp-backoff-vector-field}}
%     % \vspace{-2mm}\caption{\dsl{} vector fields for the retry storm.}\label{fig:retry-storm-vector-field}
% \end{figure*}

\textbf{Step 3: Manage scheduling.} Our characterization of metastability identifies two approaches to achieving metastable fault tolerance: ($i$) eliminate the fault or ($ii$) avoid it through appropriate scheduling. While appealing, the first option is often impractical or even impossible without post-mortem insight.
For example, in one incident an air conditioning crash raised the temperature, causing (i) a worker’s CPU to overheat, prompting (ii) a thermal manager to put the worker to sleep, triggering (iii) a load balancer to misinterpret the worker as idle and assign to it  more load, leading to (iv) sustained overheating due to overload---repeating indefinitely~\cite{understanding-vicious-cycles-khan2011}. The fault stemmed from an interaction between load and {\em temperature}, a factor designers rarely model.

Short of modeling the entire environment, the most viable strategy is to manage scheduling carefully when composing stabilizing systems.
This requires explicitly declaring systems as stabilizing where appropriate, which introduces stabilizing and destabilizing interactions.
The system's scheduling must satisfy two properties: (\textbf{R1}) schedule destabilizing actions tentatively 
and (\textbf{R2}) prioritize stabilizing actions over destabilizing ones until the composition stabilizes.
R2 ensures convergence to globally zero potential, while R1---combined with compatibility---guarantees that once the system is globally stable, no further destabilizing actions occur; postponed tentative destabilizing actions disappear.
Without this guarantee, scheduling merely reorders increments and decrements, leaving the final potential unchanged.
Note that scheduling decisions of import span the entire stack: timers, OS scheduling, network scheduling, etc.
The main merit of the metastability skeleton---and \dsl{}---is striking a balance between generality and feasibility in collecting some of such parts of the system into one specified object of inquiry.

In the retry storm example, exponential backoff is one way to implement R1 and R2: the growing backoff postpones retries until they effectively cease, allowing the server to process pending acknowledgments and clear the retrier’s queue, thus altogether eliminating the need for further retries.

\textbf{Step 4: Complete the proof.} 
A system is metastable-fault-tolerant if it stabilizes despite having a metastable fault.
Proving metastable fault tolerance requires showing that, from any arbitrary state that a \shock{} may leave a composition of stabilizing systems in, the system eventually stabilizes globally.
The first step is to define the scope of the \shock{}: the most general model would allow a \shock{} to land the system in any arbitrary state, while more realistic models would constrain the state space to those reachable under a specific adversary.
The \shock{} has to be finite: there must exist an unknown time after which ($i$) adversarial interactions cease, and ($ii$) all remaining adversarial effects are healed ($e.g.$, crashed nodes are uncrashed).
This is akin to the partial synchrony assumption~\cite{dwork1988partialsynchrony} made for consensus protocols, where the network is eventually required to be synchronous forever.
Specifying an adversary and proving stabilization establishes, in the same breath, tolerance against all faults triggered by the adversary.

% The initial state after the \shock{} is not the only artefact of the \shock{}; there might be some events pending right after the \shock{} ends, events for which there would be no explanation if we simply look at the initial state.
% Therefore, the designer should assume an arbitrary initial state \emph{and} an arbitrary list of pending events.
% Both the state and the list satisfy certain invariants, depending on how strong the adversary is \af{Explain more, and add an example.}.

Using the scheduler properties outlined above, the designer must prove that ($i$) the system's potential never increases and ($ii$) sometimes decreases, and that ($iii$) no destabilizing events occur after global stabilization. Together, these guarantees imply metastable fault tolerance; techniques from the self-stabilization literature can be of great help here.
In~\S\ref{sec:MFT}, we apply this methodology to design a metastable-fault-tolerant cluster manager for a novel incident.

%% file: case-2.tex
\section{Case Study: Oscillating Membership}\label{sec:MFT}

We use an incident from a major commercial gaming platform with hundreds of millions of users to show our methodology in action.
Consider a cluster of workers~${w_0, w_1, \dots, w_n}$ managed by a cluster manager (\texttt{CM}).
Each worker is in one of three states: active, idle, or crashed.
Active workers serve client requests; idle workers do not; crashed workers remain crashed until the environment restores them to idleness.
Client load originates from an external load balancer (not modeled here) and is inversely proportional to the number of active workers~$W$: as~$W$ decreases, the load per active worker increases.

\texttt{CM} aims to maintain at least~$W_{min}$ active workers to tolerate crashes.
It listens for periodic heartbeats from workers it considers active and, upon a timeout, issues ($i$) a \texttt{sleep} command to the unresponsive worker and ($ii$) a \texttt{wakeup} command to an idle worker.
Workers transition to active upon receiving \texttt{wakeup} and to idle upon receiving \texttt{sleep}.

During the incident metastability manifested as persistent oscillation in~$W$.
After a \shock{} that crashed some workers, the cluster manager failed to restore~$W$ to~$W_{min}$; instead,~$W$ oscillated as workers were repeatedly put to sleep and awakened.
This behavior persisted even after crashed workers recovered, and packet loss and queueing delays did not play a role.
Nothing apparent in the worker or cluster manager implementations suggests the root cause. %\ah{An illustration with the actual membership numbers, to show the oscillation, would be nice, if we can fit it in.}

\textbf{The Culprits.} Two implicit aspects of the system interact to create a cycle leading to oscillation: ($i$) overloaded active workers miss heartbeats ({\em e.g.}, the service thread starves the heartbeat thread), and ($ii$) \texttt{wakeup} commands take longer to execute than \texttt{sleep} commands ({\em e.g.}, waking up involves provisioning virtual machines, whereas sleeping is a simple state change).
The failure unfolds as follows: after some workers crash, ($i$) remaining active workers become overloaded; ($ii$) \texttt{CM} detects crashes and issues \texttt{wakeup} commands; ($iii$) before new workers activate, overloaded workers miss heartbeats, causing \texttt{CM} to put them to sleep; ($iv$) new workers finally wake up, only to become overloaded because the old ones are now idle---this patterns repeats indefinitely.

\begin{figure}[t]
    \centering
    \includegraphics[width=\columnwidth]{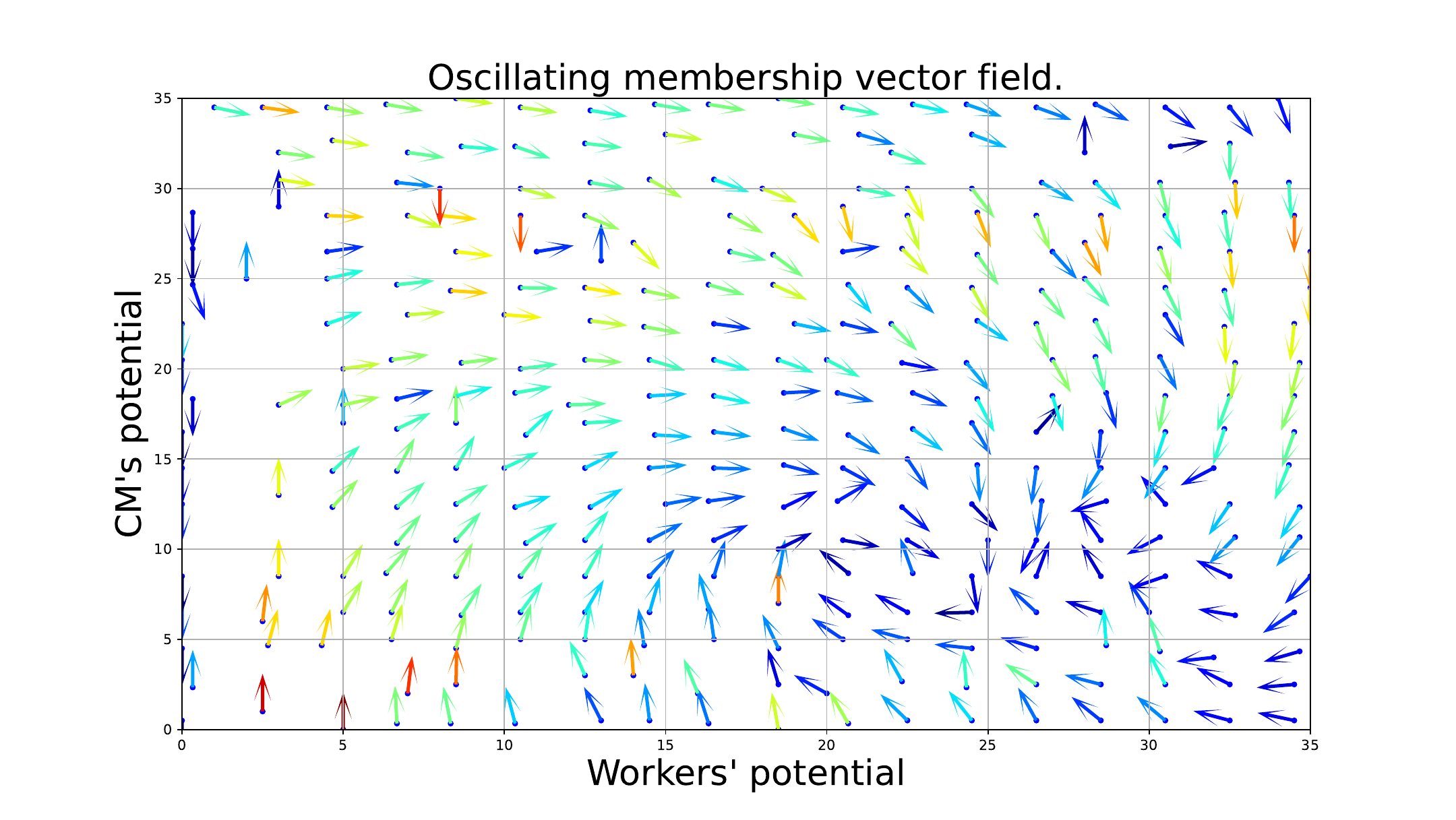}
    \caption{The \dsl{} vector field for the oscillating membership case. % Arrow color indicates tendency intensity at that point; warmer means higher intensity.
    }
    \label{fig:oscillating-membership-vector-field}
\end{figure}

\textbf{Potential Functions.} Both workers and \texttt{CM} respond to a \shock{} as stabilizing systems: workers aim to handle excess load, and \texttt{CM} seeks to maintain sufficient active workers.
While load is a natural candidate for worker potential, a useful proxy is~$f_1 = W_{noOverload} - W$, where~$W_{noOverload}$ is the minimum number of active workers required to avoid missed heartbeats.
Workers require an environment that activates enough peers to drive~$f_1$ to zero---here, \texttt{CM} and the crashing environment together form that environment.
For \texttt{CM}, let~$W'$ denote the number of workers it \emph{deems} active; then~$f_2 = W_{min} - W'$ is a suitable potential.\footnote{No action of \texttt{CM} increases~$f_2$, as \texttt{sleep} commands decrease~$W$ and not~$W'$.}
Notably, \texttt{CM} cannot reduce~$f_2$ by its own actions, and relies on timely heartbeats to stabilize.
The functions~$f_1$ and~$f_2$ are immediately useful for checking compatibility and exposing the fault.

\textbf{Compatibility.} Let~$E$ denote an environment that never crashes workers.
Under~$E$, if~$f_2 = 0$ holds, then \texttt{CM} believes enough workers are active, implying it receives sufficient heartbeats and thus issues no \texttt{sleep} commands.
Consequently,~$f_1 = 0$; formally,~$E \land f_2 = 0 \rightsquigarrow^+ f_1 = 0$.
Similarly, if~$f_1 = 0$ holds under~$E$, workers are never overloaded, and if~$f_2 > 0$, \texttt{CM} will activate additional workers until~$f_2 = 0$; thus,~$E \land f_1 = 0 \rightsquigarrow^+ f_2 = 0$.
This establishes compatibility.

% Note how these potential functions do not include implementation details.
% In fact,~$f_1$ is a function of the state of all workers combined, and is not an explicit state variable anywhere in the system.
% They are nonetheless useful abstractions, as they reveal the metastable fault.
% The workers increase~$f_2$ by taking an action representing missing a heartbeat---explicit in the model by virtue of expecting crashes---which leads to timeouts at \texttt{CM} and a decrease in~$W'$.
% \texttt{CM} increases~$f_1$ by sending \texttt{sleep} commands to workers that might be active, hence decreasing~$W$.

\textbf{Metastable Fault.} Workers destabilize \texttt{CM} by missing heartbeats, while \texttt{CM} destabilizes workers by issuing \texttt{sleep} commands.
Model \texttt{CM}’s actions as~$\texttt{wakeup}(i)$ and~$\texttt{sleep}(i)$: the former, if~$w_i$ is idle, decrements~$f_1$ (a stabilizing action), while the latter, if~$w_i$ is active, increments~$f_1$ (a destabilizing action).
A worker action~$\texttt{miss}(i)$ increments~$f_2$ by making \texttt{CM} believe~$i$ has crashed.
The possibility of a worker missing heartbeats---and thus destabilizing \texttt{CM}---is inherent in the functional composition of workers and \texttt{CM}; indeed, \texttt{CM} anticipates the possibility of missed heartbeats due to crashes.
Locating the fault therefore requires only a careful static analysis of the interface between the components.

\textbf{Metastable Failure.} The schedulers turn the fault into a failure: overloaded workers starve heartbeats, and \texttt{CM}, by issuing \texttt{sleep}(i) and \texttt{wakeup}(i) simultaneously, implicitly allows \texttt{sleep} to take effect before \texttt{wakeup}.
This time gap gives the scheduler of the remaining active workers, which are now overloaded, time to destabilize \texttt{CM} by in turn missing heartbeats, prompting further \texttt{sleep} commands---still within the time gap between \texttt{sleep} and \texttt{wakeup} commands.
By the time the first \texttt{wakeup} occurs, the system has drifted even farther from stability.

% \begin{figure}[tb]
%     \centering
%     \includegraphics[width=1.1\linewidth]{figures/oscillating-membership-failure.pdf}
%     \caption{The metastable failure for the oscillating membership incident. The symbols ``O.L.'', ``s'', and ``w'' indicate overload, sleep, and wakeup, respectively. The colors red and green signify destabilization and stabilization, respectively.}
%     \label{fig:oscillating-membership-metastable-failure}
% \end{figure}

\subsection{An MFT Cluster Manager}

We apply the method from \S\ref{sec:methodology} to this incident.

\textbf{Step 1: Derive metastability skeleton.} We implement a system similar to the one we received the report on using the \dsl{} DSL.
The implementation has two agents, one for \texttt{CM} (129 LoC) and one representing the entire ensemble of workers (116 LoC).
The workers' agent simulates the wakeup delay, and also omits heartbeats if~$W < W_{noOverload}$.
Bundling workers together makes it easier to detect overload by just counting their number and abstracting load away.
Designers can also choose to implement each worker as a separate agent, and to use internal scheduling to starve heartbeats; the results would not change.
The idiomatic way of writing \dsl{} code helps the designer specify the metastability skeleton.

\textbf{Step 2: Study dynamics.} Figure~\ref{fig:oscillating-membership-vector-field} shows the resulting vector field for a cluster of 40 workers with fixed timeouts.
\dsl{} generated this field from 1,000 executions, each simulating 100 seconds, in about three minutes.
The field reveals two opposing tendencies, one pulling toward zero potential and the other pushing away, resulting in a persistent oscillation.
This visualization makes the risk of oscillation immediately apparent and prompts deeper reasoning about the scheduler.

\textbf{Step 3: Manage scheduling.} We demonstrate that it is possible to build an MFT cluster manager for this incident \emph{without} knowing that workers starve heartbeats,~$i.e.$, by focusing on the cluster manager as a stabilizing system.
We make a simplifying assumption: overload is an instantaneous function of~$W$---the moment~$W_{min} - W = 0$, active workers are no longer overloaded.

The cluster manager (\texttt{CM}) issues two commands: \texttt{sleep}, which is destabilizing, and \texttt{wakeup}, which is stabilizing.
Our methodology (\S\ref{sec:methodology}) rests on two principles: destabilizing actions must be tentative (\textbf{R1}), and stabilizing actions must take precedence (\textbf{R2}). %\ah{I still wish that this were somehow crisper and easier to understand. It's fine here, because you're taking the reader by the hand and are guiding them towards the solution. But if I were on my own with a different system that I didn't fully understand, I think I would have a hard time... which potential function do I pick? What are the 'stabilizing' actions? How do I make the other ones 'tentative', and what exactly does that mean?}
To enforce these principles, we introduce \mftcm{}, a cluster manager that, by an appropriate combination of new mechanisms and policies, achieves metastable fault tolerance even when workers starve heartbeats.
Importantly, \mftcm's new scheduling policies and mechanisms  do not alter the cluster manager’s functionality: they only control the timing of actions. 

\noindent\underline{\em New scheduling mechanisms and policy} Because \mftcm{} cannot control how the environment orders \texttt{sleep} and \texttt{wakeup} events, the designer must estimate empirically the time needed by each command to take effect, denoted, respectively, as $d_\texttt{sleep}$ and $d_\texttt{wakeup}$.
For simplicity, we assume $d_\texttt{sleep}=0$.

To satisfy R1 and R2, \mftcm's scheduling mechanisms ensure that \texttt{sleep} events remain tentative, that they can be delayed until sufficient \texttt{wakeup} events occur, and that all \texttt{wakeup} events target idle workers.
Policy complements this by guaranteeing that delays for \texttt{sleep} events are long enough that they will follow \texttt{wakeup} events.
To achieve this, \mftcm{} introduces an artificial delay $d'_\texttt{sleep}$; without it, \texttt{sleep} would precede \texttt{wakeup} since $d_\texttt{wakeup}>d_\texttt{sleep}=0$.

%\vspace{-1em}
\noindent\underline{\em A Refined View of Workers' States} To implement these guarantees, \mftcm{} refines its view of worker states. Instead of modeling them as either \texttt{idle} or \texttt{active}, it uses five states: $\{\texttt{idle}, \texttt{active}, \texttt{waking-up}, \texttt{snoozing}, \texttt{pending-timeout}\}$.
When a worker $w_i$ times out, \mftcm{} checks for an idle worker $w_j$. If one exists, it issues $\texttt{wakeup}(j)$ and marks $w_j$ as \texttt{waking-up}. Upon confirmation, $w_j$ becomes \texttt{active}; otherwise, it reverts to \texttt{idle}. Simultaneously, $w_i$ moves to \texttt{snoozing}, but $\texttt{sleep}(i)$ is delayed by $d'_\texttt{sleep}$. If no idle worker exists, $w_i$ enters \texttt{pending-timeout}, and \mftcm{} retries until a heartbeat arrives or an idle worker is found.

The intermediate states \texttt{waking-up} and \texttt{snoozing} are essential to prevent issuing \texttt{wakeup} to non-idle workers.
%Consider a cluster with five workers, three of which must remain active.
If workers were modeled only as either \texttt{idle} or \texttt{active}, overlapping \texttt{sleep}/\texttt{wakeup} actions could cause a \texttt{wakeup} command to be sent to an overloaded worker, hence resulting in no decrease in potential.
Similarly, \texttt{pending-timeout} ensures eventual recovery to $W_{\min}$ active workers by enabling \mftcm{} to remember timed out workers and to retry until either idle workers are found or heartbeats resume.

\noindent\underline{\em Choosing a delay policy}  Avoiding metastable failures requires {\em all} \texttt{sleep} events to happen after any \texttt{wakeup} event until stability is restored post-\shock{}.
Thus, the value of~$d'_\texttt{sleep}$ is critical.
Setting~$d_\texttt{wakeup}<d'_\texttt{sleep}$ may seem sufficient, but it does not guarantee stabilization.
It only ensures the correct ordering for every pair of \texttt{wakeup} and \texttt{sleep} events issued in response to a timeout, and a poor choice of~$T$ can still lead to a \texttt{sleep} event happening before an unrelated \texttt{wakeup} event, and cause a metastable failure.
Instead, as we show in the following proof, we require~$d_{wakeup} + T < d'_{sleep}$. 

\noindent\underline{\em The Protocol} Protocol~\ref{protocol:cm} illustrates \mftcm{}.
The protocol proceeds in steps; it comprises three procedures driven by \textsc{Main}, which is executed in each step. \textsc{Uncrashed} runs asynchronously when a worker notifies \mftcm{} that it is idle after a crash; heartbeat handling is omitted for brevity.
%\la{Do you think these observations are essential?}
Two practical observations follow. First, communication between workers and \mftcm{} occurs only during state changes. Since active worker counts oscillated in the reported incident, we infer that this messaging infrastructure is unaffected by overload and thus reliable.
Second, although our implementation uses a single thread for timer updates, timers can run concurrently in practice. \mftcm{} itself is not under load and can schedule timers successfully.

{
\floatname{algorithm}{Protocol}
\begin{algorithm}[H]
    \caption{The MFT cluster manager~\mftcm{}.}%
    \label{protocol:cm}
    \begin{algorithmic}[1]\scriptsize
        \LeftComment{State variables:}
        \State $w \gets [\texttt{active}, \dots, \texttt{active}, \texttt{idle}, \dots, \texttt{idle}]$ \Comment{Initial view of the workers.}
        \State $w' \gets [\texttt{idle}, \dots, \texttt{idle}]$ \Comment{Next view of the workers.}
        \State $timers \gets [0, 0, \dots, 0]$ \Comment{Initial values of the timers.}
        \State $\forall i\in\{\texttt{idle}, \texttt{active}, \dots\}: X_i \gets \{j\ |\ w[j] = i\}$\Comment{Partitions of worker states.}
        
        \Function{getPartition}{$i$}\label{line:protocol-get-partition}
            \Return $\{j\ |\ w[j] = i\}$
        \EndFunction

        \Procedure{Uncrashed}{$j$}
            \If{$w[j] \in \{ \texttt{active}, \texttt{waking-up}, \texttt{pending-timeout} \}$}\label{line:protocol-uncrashed-check-124}
                \State $\textsc{Send}(j, \texttt{wakeup});$\label{line:protocol-uncrashed-send-wakeup}
                \State $w[j] \gets \texttt{waking-up};\ timers[j] \gets 0;$
            \Else\label{line:protocol-uncrashed-check-03}
                \State $w[j] \gets \texttt{idle};\ timers[j] \gets 0;$
            \EndIf
        \EndProcedure
        
        \Procedure{Timeout}{$j$}
            \If{$w[j] \in \{ \texttt{active}, \texttt{pending-timeout} \}$}\label{line:protocol-check-current-state-1or4}
                \If{$\exists i: w[i] == \texttt{idle}$}\label{line:protocol-does-idle-worker-exist}
                    \State $\textsc{Send}(i, \texttt{wakeup});$\label{line:protocol-send-wakeup}
                    \State $w'[i] \gets \texttt{waking-up};\ timers[i] \gets 0;\ timers[j] \gets 0;$
                    \State \Return \texttt{snoozing}
                \Else{$\ timers[j] = 0;\ \Return\ \texttt{pending-timeout}$}\label{line:protocol-idle-worker-doesnt-exist}
                \EndIf
            \ElsIf{$w[j] == \texttt{waking-up}$}\label{line:protocol-check-current-state-2}
                    $timers[j] = 0;\ \Return\ \texttt{idle}$
                %\State $timers[j] \gets 0;\ \Return\ \texttt{active}$\label{line:protocol-wakeup-successful}
            \EndIf
        \EndProcedure
        
        \Procedure{Main}{}\label{algo:sieve:main}\Comment{Execution starts from here}
        \For{$i$ in $\{\texttt{idle}, \texttt{active}, \dots\}$}\label{line:protocol-main-get-partition}\Comment{Update the partitions.}
            \State $X_i \gets \textsc{getPartition}(i);$
        \EndFor
        \State $timers \gets \textsc{updateTimers}(timers);$\label{line:protocol-update-timers}
        \For{$j$ in $X_{\texttt{active}}$}\label{line:protocol-check-timeout-1}
            \If{$timers[j] == T$}
                \State $w'[j] \gets \textsc{Timeout}(j);$\label{line:protocol-call-timeout-1}
            \EndIf
        \EndFor
        \For{$j$ in $X_{\texttt{waking-up}}$}\label{line:protocol-check-timeout-2}
            \If{$timers[j] == d_{wakeup}$}
                \State $w'[j] \gets \textsc{Timeout}(j);$\label{line:protocol-call-timeout-2}
            \EndIf
        \EndFor
        \For{$j$ in $X_{\texttt{snoozing}}$}\label{line:protocol-check-timeout-3}
            \If{$timers[j] == d_{sleep}$}
                %\If{$M[j].type\ != \texttt{heartbeat} \lor M[j].time < \textsc{time}.\textsc{now}() - T$}\label{line:protocol-check-sleeping-last-message}
                    $\textsc{Send}(j, \texttt{sleep});$
                %\EndIf
                \State $timers[j] \gets 0;$\label{line:protocol-set-sleeping-timer-zero}
            \EndIf
        \EndFor
        \For{$j$ in $X_{\texttt{pending-timeout}}$}\label{line:protocol-check-timeout-4}
            \If{$timers[j] == T$}
                \State $w'[j] \gets \textsc{Timeout}(j);$\label{line:protocol-call-timeout-4}
            \EndIf
        \EndFor
        \State{$w \gets w';$}\label{line:protocol-update-state}\Comment{Update the worker states}
        \EndProcedure
    \end{algorithmic}
\end{algorithm}
}

\textbf{Step 4: Complete the proof.} For ease of exposition, we make an assumption:~$W_{noOverload}=W_{\min}$. %\af{Is this too risky?}.

The adversary for \mftcm{} crashes workers during a temporary \shock{}; consequently, all crashed workers must eventually be no longer crashed, though the duration of the \shock{} is unknown. After the \shock{}, the system begins in an arbitrary state that consists of ($i$) workers (that are either idle or active) and ($ii$) \mftcm{}’s view of the state of these  workers, with the corresponding timers.
%The adversary for \mftcm{} just crashes workers, and since the \shock{} is temporary, it has to uncrash all crashed workers at the end of the \shock{}; the duration of the \shock{} is unknown.
%The initial state for the system after the \shock{} includes workers that are either idle or active, \mftcm{}'s view of the workers, and corresponding timers.

Following the methodology in \S\ref{sec:methodology}, we must show that, from any such initial state:
($i$) \mftcm{} postpones \texttt{sleep} commands while the system is stabilizing;
($ii$) \mftcm{} issues enough \texttt{wakeup} commands to restore~$W = W_{min}$; and
($iii$) once~$W = W_{min}$, \mftcm{} ceases issuing \texttt{sleep} commands.
We prove them in reverse order.

($iii$): Because overload is an instantaneous function of~$W$, the moment~$W = W_{min}$, all active workers are no longer overloaded and respond to \mftcm{} with heartbeats. At that point, \mftcm{} knows that at least~$W_{min}$ workers are active and will never issue \texttt{sleep} events again.

($ii$): Let~$X_i$ denote the set of workers that \mftcm{} deems in state~$i$, for $i \in \{\texttt{idle},\texttt{active},\texttt{waking-up},\texttt{snoozing},\allowbreak \texttt{pending-timeout}\}$.
Define:~$Z = |X_{\texttt{active}}| + |X_{\texttt{snoozing}}| + |X_{\texttt{pending-timeout}}|$,~$V = |X_{\texttt{active}}| + |X_{\texttt{waking-up}}| + |X_{\texttt{snoozing}}| + |X_{\texttt{pending-timeout}}|$, and~$U = |X_{\texttt{active}}| + |X_{\texttt{waking-up}}| + |X_{\texttt{pending-timeout}}|$.
After the \shock{}, workers notify \mftcm{} upon recovering and are then placed in state \texttt{waking-up}.
If \mftcm{} deems a worker~$w_i$ in~$X_j$ for~$j \in \{\texttt{active},\texttt{snoozing},\texttt{pending-timeout}\}$, then~$w_i$’s actual state is \texttt{active}.
Thus, after the shock,~$Z = W$ holds invariantly.
Furthermore, executions satisfy the invariant~$U = W_{min}$. The invariant holds initially, because the system starts with~$W_{min}$ active workers and the rest idle.
Subsequently, workers in~$X_{\texttt{pending-timeout}}$ either remain or move to~$X_{\texttt{waking-up}}$; workers in~$X_{\texttt{waking-up}}$ either remain or move to~$X_{\texttt{active}}$; and whenever a worker leaves~$X_{\texttt{active}}$, it either moves to~$X_{\texttt{pending-timeout}}$ or to~$X_{\texttt{snoozing}}$ while another worker moves to~$X_{\texttt{waking-up}}$.
Therefore,~$U$ never changes, so~$U = W_{min}$ throughout. Since~$V \ge U = W_{min}$, we have~$|X_{\texttt{waking-up}}| \ge W_{min} - Z$.
Because~$Z = W$, it follows that~$|X_{\texttt{waking-up}}| \ge W_{min} - W$: there always are enough workers scheduled to wake up.

%To show the second point, let~$X_i$ be the set of workers deemed in state~$i$ by \mftcm{} for~$i\in\{0, 1, 2, 3, 4\}$.
%Let~$Z = |X_1| + |X_3| + |X_4|$,~$V = |X_1| + |X_2| + |X_3| + |X_4|$, and~$U = |X_1| + |X_2| + |X_4|$.
%After the \shock{}, since workers inform \mftcm{} upon uncrashing and \mftcm{} then deems them in state 2, if \mftcm{} deems worker~$w_i$ in~$X_j$ for~$j\in\{1, 3, 4\}$ then~$w_i$'s actual state is \texttt{active}.
%In other words, after the \shock{} we have~$Z = W$.
%Moreover, the executions of the system satisfy the invariant~$U = W_{min}$.
%To see why, note that the system starts from a state where there are~$W_{min}$ active workers with the rest idle, so initially we have~$U = |X_1| = W_{min}$.
%Furthermore, workers in~$X_4$ either stay there or move to~$X_2$, workers in~$X_2$ either stay there or move to~$X_1$, and whenever a worker leaves~$X_1$, it either moves to~$X_4$, or moves to~$X_3$ but another worker moves to~$X_2$.
%Therefore,~$U$ never changes, which means~$U = W_{min}$ all the time.
%Now note that~$V\geq U = W_{min}$, which implies~$|X_2|\geq W_{min} - Z$.
%Since after the \shock{} we have~$Z=W$, we get~$|X_2|\geq W_{min} - W$; there always are enough workers scheduled to wake up.

($i$): Note that at all times the difference between the wake-up times of any two workers in $X_{\texttt{waking-up}}$ is at most $T$.
This holds because \mftcm{} checks heartbeats every $T$ timesteps and retries workers in \texttt{pending-timeout} every $T$ timesteps.
Since $d_\texttt{wakeup} + T < d_\texttt{sleep}$, whenever a worker enters \texttt{snoozing}, its corresponding \texttt{sleep} event is scheduled only after all pending \texttt{wakeup} events have occurred.

We apply our methodology to the look-aside cache case study~\cite{bronson-2021-metastability, huang-2022-metastable} as well, and defer it to Section~\ref{sec:cache-incident} of supplementary materials for lack of space.

%% file: related.tex
\section{Related Work}\label{sec:related-work}

Metastable failures were first introduced by Bronson et al.~\cite{bronson-2021-metastability}, who informally characterize them as persistent, self-sustaining degraded modes of operation.
Huang et al.~\cite{huang-2022-metastable} expand on this work by presenting a broader suite of industrial metastability incidents, defining metastability in terms of persistent overload after a trigger, and categorizing self-sustaining mechanisms as either workload amplification or capacity degradation.
Habibi et al.~\cite{habibi-2023-msfmodel} focus on the retry storm incident, and leverage queuing theory and continuous-time Markov chains to explain how repeated request retries destabilize the system.
Isaacs et al.~\cite{analyzing-metastable-failures} reiterate metastable failures as self-sustaining congestive collapse, and present a host of tools ranging from a discrete-event simulator to production-level software to study the effects of various system parameters on emergent self-sustaining overload.
Farahbakhsh et al.~\cite{hotnets2025} provide an operational system model involving queues, and define metastability as never-ending difference between the outputs of two copies of a system: one starting after a \shock{}, and one starting from the normal initial state.
Anand~\cite{blueprint-anand2023} propose a toolchain to develop microservice-based systems in a modular, plug and play, and configurable fashion, and use their framework to demonstrate metastable failures in the deathstarbench suite~\cite{deathstarbench2019}.
While insightful, these works ($i$) do not characterize the underlying causes of metastable failures, and some ($ii$) cannot account for important metastability incidents that do not manifest as overload.

We have identified metastability incidents that prior studies have overlooked.
Floyd and Van Jacobson~\cite{floyd1993synchronization} model router synchronization using Markov chains, showing how message exchange can synchronize and overload the network.
Khan et al.~\cite{understanding-vicious-cycles-khan2011} describe a vicious cycle between a load balancer and a thermal manager, triggered by an AC failure.
Qian et al.~\cite{qian-2023-viciouscycles} report a Hadoop~\cite{hdfs} incident where a missed heartbeat causes recovery actions to interfere with heartbeat management, exposing harmful interactions between error handling and request handling.
Ford~\cite{ford2012icebergs} documents oscillations caused by faulty coupling between a load balancer and a power optimizer.

Distributed systems exhibit many failure modes,~$e.g.$, crash~\cite{FLP}, fail-stop~\cite{fail-stop}, fail-slow~\cite{Gunawi-fail-slow, limplock}, and Byzantine faults~\cite{Byz-generals}.
Metastability, however, is fundamentally different: it does not stem from permanent component failures but is an {\em emergent property} sustained by feedback loops spanning individually correct components.

Classic examples of failures caused by component interactions rather than local faults include deadlocks, livelocks, and race conditions.
Deadlocks and livelocks arise from cyclic dependencies among processes~\cite{Silberschatz}.
Deadlocks can be characterized using resource-allocation graphs~\cite{coffman-1971-deadlocks} and avoided via algorithms such as Banker’s algorithm~\cite{Dijkstra-banker} or prevented entirely using lock-ordering disciplines~\cite{havender68}.
Concurrent accesses to shared resources can lead to nondeterministic failures~\cite{Engler-racerx} and require atomicity.
Metastable faults also stem from cyclic dependencies, expressed through the concept of stabilization.
The solution, as in deadlock, lies in scheduling.

Self-stabilizing protocols were first introduced by Dijkstra~\cite{dijkstra-1974-selfstabilizing}.
Self-stabilizing systems converge to a set of good states and remain there.
Metastable faults and failures arise when such systems are composed.

Recent work has explored correct-by-design cluster management controllers.
Sun et al.~\cite{sun2024anvil} propose eventually stable reconciliation (ESR), requiring controllers to eventually drive the system to the desired state and keep it there.
They present Anvil, a tool for designing controllers and proving correctness using Verus~\cite{lattuada2023verus}.
While ESR resembles stabilization, their framework excludes metastability: it assumes components stop destabilizing the controller until it updates the state, whereas metastability occurs precisely when such problematic interactions persist.

%% file: conclusion.tex
\section{The Road Ahead}\label{sec:conclusion}

We introduce the first analytical characterization of metastable faults and failures, and present a methodology for designing MFT systems.
We apply our methodology to three case studies, showing the generality of our approach.
Our contributions set the stage for future work on building a toolbox to prevent, detect, and mitigate metastable failures.

% We posit that research on metastable faults and failures should be similar to security research. We are always at the mercy of adversarial \shock{}s that break implicit assumptions and reveal faults hidden across the stack.
% In fact, a similar, frustrating experience led researchers at Meta to adopt an adversarial view of their own system, managing finally to locate the fault~\cite{link-imbalance-bronson2014}.

%% file: appendix.tex
\section{Model and Formal Semantics}

This section includes our formal model, the semantics for the~$\rightsquigarrow^+$ operator, our definitions in the paper for ease of access, and the proof for Theorem 1 from the paper.

\subsection{Model}\label{sec:formal-model}
To reason about how components destabilize each other, we need a model that captures state changes, action effects, and how components read and write shared state.

A system~$S = (\Sigma_S, \mathcal{A}_S, \mathcal{V}, \mathcal{V}_E)$ is a state machine with a state space~$\Sigma_S$, a set of actions~$\mathcal{A}_S$, a set of variables~$\mathcal{V}$, and a set of variables~$\mathcal{V}_E$ such that~$\mathcal{V}_E\subseteq \mathcal{V}$---these are variables that the environment in which the system operates can modify.
Each state is a valuation of the variables in~$\mathcal{V}$.
~$S$ interacts with an environment that can modify the variables in~$\mathcal{V}_E$ by taking environment actions.
Environment actions are arbitrary---the system does not control the environment's behavior.
We denote state transitions of the system as~$s\rightarrow s'$ of the system, where~$s, s'\in\Sigma_S$.
We omit the subscript~$S$ when the system is clear from the context.
%We also omit~$\mathcal{V}$ for brevity, 

For a system~$S = (\Sigma, \mathcal{A}, \mathcal{V}, \mathcal{V}_E)$, a predicate~$P$ is a subset of~$\Sigma$; an environment predicate~$Q$ is a predicate that involves only variables in~$\mathcal{V}_E$. 
Any execution~$\sigma$ of the system is a trace~$\sigma=s_0\rightarrow s_1\rightarrow s_2\rightarrow\dots$, where~$s_i\in\Sigma$ for all~$i\geq 0$ and in each transition at least one of system and environment takes an action.
For every execution~$\sigma$ of the system, all suffixes of~$\sigma$ are also executions of the system.
Given some environment predicate~$E$ and system states~$s$ and~$s'$, a transition~$s\rightarrow s'$ is an~$E$-step if~$s$ and~$s'$ satisfy~$E$.
If during a transition~$s\rightarrow s'$ only the system takes an action, say~$\alpha$, we further label the transition as~$s\rightarrow^\alpha s'$.
%Whenever the environment has taken an action, followed by a system action, we say that the execution has taken a \emph{step}.
To keep notation short, we will omit mention of variables and write~$S=(\Sigma, \mathcal{A})$; it is to be inferred that the context implicitly indicates which variables the environment can modify.

If a system assigns a value to a variable following an action, we say the system \emph{writes to} the variable via the action, establishing a writes-to relation between them; if a system reads the value of a state variable with an action, we say that the system \emph{reads from} the variable via the action.

Let~$\bot$ denote the absence of an action for any system.
Given two systems~$S_1=(\Sigma_1, \mathcal{A}_1, \mathcal{V}_1, \mathcal{V}^1_E)$ and~$S_2=(\Sigma_2, \mathcal{A}_2, \mathcal{V}_2, \mathcal{V}^2_E)$, we define their composition as the system~$S_1\parallel_{\mathcal{V'}} S_2 = (\Sigma_1\times\Sigma_2, \mathcal{A}, \mathcal{V}_1\cup\mathcal{V}_2, \mathcal{V}')$, where~$\times$ denotes the Cartesian product and~$\mathcal{V}'\subseteq \mathcal{V}^1_E\cup\mathcal{V}^2_E$.
As for~$\mathcal{A}$, assuming~$\alpha\in\mathcal{A}_1$ and~$\beta\in\mathcal{A}_2$, each action~$\gamma\in\mathcal{A}$ is of one of  ($i$)~$(\alpha, \beta)$, ($ii$)~$(\alpha, \bot)$, and ($iii$)~$(\bot, \beta)$: the composition makes a state transition whenever at least one component takes an action.
The set~$\mathcal{V}'$ specifies the remaining environment variables in the composition, since components can partially be the environment for each other.
If, upon composition,~$S_1$ becomes responsible for writing to some variable of~$S_2$ that was previously written to by the environment, the environment stops writing to that variable in the composition.
To keep notation short, we will drop the subscript~$\mathcal{V}'$; the specifics of which component writes to which environment variable of another component are to be inferred from the context.
We generalize this to compositions of more than two systems, and denote the composition of the~$n$ systems~$\{S_i\}_{0\leq i < n}$ with~$\parallel_{0\leq i < n}S_i$.
Whenever the composition~$S$ takes an action~$\alpha=(\alpha_0,\dots, \alpha_{n-1})$ and the environment takes an action~$e$, the transition is serializable~\cite{papadimitriou1979serializability, bernstein1979serializability},~$i.e.$, the resulting state is equivalent to that resulting after \emph{some} serial execution of the actions~$\{\alpha_i\}_{0\leq i < n}$ and~$e$.

Given a composition of~$n$ systems~$\{S_i\}_{0\leq i < n}$, we lift the writes-to relation between systems and variables to a writes-to relation between systems.
If a system~$S_i$, via some action~$\alpha_i\in\mathcal{A}_i$, writes to a state variable of~$S_i$ that system~$S_j$ reads from, or directly to a state variable of~$S_j$ that~$S_j$ reads from, we say that~$S_i$ writes to~$S_j$ via~$\alpha_i$.
This relation induces a \emph{composition blueprint}: a directed graph whose vertices are systems and whose edges represent writes-to relations between them.
If~$S_i$ writes to~$S_j$, the edge is from~$S_i$ to~$S_j$.
For a system~$S_i$, we call the set of all systems that write to it its \emph{writing neighbors}, and denote it with~$W_i$.
Similarly, we call the set of all systems that~$S_i$ writes to as its \emph{reading neighbors}, and denote it with~$R_i$.

To facilitate our proofs, which entail liveness assertions, we assume that in every execution of a composition~$S$ ($i$) each component takes actions infinitely often, and ($ii$) during each transition at least one writing neighbor of every component takes an action.

\subsection{Semantics for~$\rightsquigarrow^+$}\label{sec:squiggly-semantics}

Let~$S=(\Sigma, \mathcal{A})$ be a system interacting with some environment,~$E\subseteq\Sigma$ be an environment predicate, and~$G\subseteq\Sigma$ a predicate.
A state~$s\in\Sigma$ satisfies a predicate~$P$,~$i.e.$,~$s\models P$, if~$s\in P$.
The temporal formula~$E\rightsquigarrow^+ G$ is an assertion on the executions of~$S$ as it is interacting with its environment.
Consider the execution~$\sigma=s_0\rightarrow s_1\rightarrow s_2\rightarrow\dots$, where~$s_i\in\Sigma$ for all~$i\geq 0$.
We define~$\sigma_{i:\infty}$ to be the execution~$s_i\rightarrow s_{i+1}\rightarrow\dots$.
~$\sigma$ satisfies~$E\rightsquigarrow^+ G$,~$i.e.$,~$\sigma\models E\rightsquigarrow^+ G$, if the following holds:
\begin{align*}
    \exists\ i: [\bigwedge_{0\leq j\leq i}(s_j\models E)]\Rightarrow [(s_i\models G) \land (\sigma_{i+1:\infty}\models G\ \mathcal{U}\ \neg E)],
\end{align*}
where~$\mathcal{U}$ denotes the \texttt{unless} temporal modality~\cite{manna1995temporal}.
The system~$S$ satisfies~$E\rightsquigarrow^+ G$,~$i.e.$,~$S\models E\rightsquigarrow^+ G$, if all of its executions satisfy~$E\rightsquigarrow^+ G$.

We use the following property of the~$\rightsquigarrow^+$ operator in the proof of Theorem 1 (\S\ref{sec:proofs}).

\begin{lemma}\label{lem:squiggly-tastes-funny}
    Let~$S=(\Sigma,\mathcal{A})$ be a system, and~$P$,~$Q$, and~$R$ predicates over~$\Sigma$.
    If~$S\models P\land Q\rightsquigarrow^+ R$ and~$S\models P\rightsquigarrow^+ Q$, then $S\models P\rightsquigarrow^+ Q\land R$.
\end{lemma}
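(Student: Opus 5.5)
The plan is to prove the lemma execution by execution, using the semantics of~$\rightsquigarrow^+$ from Section~\ref{sec:squiggly-semantics}. I fix an arbitrary execution~$\sigma=s_0\rightarrow s_1\rightarrow\dots$ of~$S$ and exhibit an index~$k$ witnessing~$\sigma\models P\rightsquigarrow^+ Q\land R$. The argument uses two facts: all suffixes of executions are executions, and the \texttt{unless} modality is weak, so~$G\ \mathcal{U}\ \neg E$ holds along a trace on which~$E$ never fails only if~$G$ holds at every position.

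\textbf{Case 1: some state of~$\sigma$ violates~$P$.} Let~$j$ be such that~$s_j\not\models P$, and take~$k=j$. Then the antecedent~$\bigwedge_{0\leq l\leq k}(s_l\models P)$ is false. The implication in the definition therefore holds vacuously, and~$\sigma\models P\rightsquigarrow^+ Q\land R$.

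\textbf{Case 2: every state of~$\sigma$ satisfies~$P$.} I proceed in three steps.
\begin{enumerate}
\item From~$S\models P\rightsquigarrow^+ Q$ I obtain an index~$i$. Since the antecedent up to~$i$ holds, we get~$s_i\models Q$ and~$\sigma_{i+1:\infty}\models Q\ \mathcal{U}\ \neg P$. Because~$\neg P$ never occurs, the \texttt{unless} forces~$Q$ to hold at every position from~$i$ onward. Hence every state of~$\sigma_{i:\infty}$ satisfies~$P\land Q$.
\item The suffix~$\sigma_{i:\infty}$ is itself an execution of~$S$, so~$S\models P\land Q\rightsquigarrow^+ R$ applies to it. This yields an offset~$m$; write~$k=i+m$. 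The antecedent~$\bigwedge_{i\leq l\leq k}(s_l\models P\land Q)$ holds by step~1. Therefore~$s_k\models R$ and~$\sigma_{k+1:\infty}\models R\ \mathcal{U}\ \neg(P\land Q)$.
\item Since~$P\land Q$ never fails, the same weak-until reasoning shows that~$R$ holds at every position~$\geq k$. Together with step~1, the conjunction~$Q\land R$ holds at~$s_k$ and at every later state. Thus~$s_k\models Q\land R$ and~$\sigma_{k+1:\infty}\models (Q\land R)\ \mathcal{U}\ \neg P$, which holds trivially because~$Q\land R$ holds everywhere on that suffix.
\end{enumerate}
The antecedent~$\bigwedge_{0\leq l\leq k}(s_l\models P)$ is true in this case, and the consequent has just been shown. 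So~$k$ witnesses~$\sigma\models P\rightsquigarrow^+ Q\land R$.

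Since~$\sigma$ was arbitrary, $S\models P\rightsquigarrow^+ Q\land R$.

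I expect the main obstacle to be the quantifier structure of the semantics rather than any temporal reasoning. The definition is an existential over~$i$ wrapping an implication, which is what forces the case split above. In particular, I have to make sure the index produced by the second hypothesis on the suffix translates back to a valid witness~$k$ for the original execution~$\sigma$, with the antecedent ranging over all of~$s_0,\dots,s_k$ and not only over the suffix. In Case~2 this is immediate because~$P$ holds on all of~$\sigma$. Case~1 has to be handled separately, via the vacuous antecedent, so that no reasoning about partial prefixes is needed.
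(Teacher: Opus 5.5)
Your proof is correct and follows the same skeleton as the paper's. You take the witness $i$ from $P\rightsquigarrow^+ Q$, apply $P\land Q\rightsquigarrow^+ R$ to a suffix beginning at $i$ (the paper uses $\sigma_{i+1:\infty}$), and use the resulting index $k$ as the witness. The only difference is your explicit case split on whether $P$ ever fails: it uses the vacuous witness allowed by the $\exists$-implication semantics, and so replaces the paper's general step of combining $Q\ \mathcal{U}\ \neg P$ and $R\ \mathcal{U}\ \neg(P\land Q)$ into $(Q\land R)\ \mathcal{U}\ \neg P$ with the simpler observation that, when $P$ always holds, both \texttt{unless} clauses collapse to ``always''.
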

\begin{proof}
    Consider some execution~$\sigma = s_0\rightarrow s_1\rightarrow s_2\rightarrow\dots$ of~$S$.
    Since~$S\models P\rightsquigarrow^+ Q$, there exists some~$i$ such that, if all~$s_j\models P$ for~$0\leq j\leq i$, then~$s_i\models Q$ and~$\sigma_{i+1:\infty}\models Q\ \mathcal{U}\ \neg P$.
    That is, after index~$i$,~$Q$ keeps holding in every state of~$\sigma$ unless~$P$ stops holding at some point.
    Similarly, since~$S\models P\land Q\rightsquigarrow^+ R$, therefore~$\sigma_{i+1:\infty}\models P\land Q\rightsquigarrow^+ R$---note that~$\sigma_{i+1:\infty}$ is a suffix of~$\sigma$, and is therefore an execution of~$S$.
    Thus, there exists some~$k$ such that, if all~$s_l\models P\land Q$ for~$i+1\leq l\leq k$, then~$s_k\models R$ and~$\sigma_{k+1:\infty}\models R\ \mathcal{U}\ \neg(P\land Q)$.
    Based on the above, if for all~$0\leq n\leq k$ we have~$\sigma_n \models P$, it holds that~$\sigma_{k}\models Q$ and~$\sigma_{k+1:\infty}\models Q\ \mathcal{U}\ \neg P$.
    Similarly, we infer that~$\sigma_{k}\models R$ and~$\sigma_{k+1:\infty}\models R\ \mathcal{U}\ \neg (P\land Q)$.
    Based on the properties of the \texttt{unless} operator, we infer that~$\sigma_{k+1:\infty}\models (Q\land R)\ \mathcal{U}\ \neg P$, and we independently establish~$\sigma_{k}\models Q\land R$.
    This proves that~$\sigma\models P\rightsquigarrow^+ Q\land R$.
    Since we picked~$\sigma$ arbitrarily, we thus have~$S\models P\rightsquigarrow^+ Q\land R$.
\end{proof}

\subsection{Faults and Failures}\label{sec:faults-failures-formal}

We provide our definitions for ease of accessibility when reading the proof.

\begin{definition}[\textbf{Potential function}]
    For a system~$S = (\Sigma, \mathcal{A})$ and a state predicate~$G\subseteq\Sigma$ representing a set of good states, a function~$f:\Sigma\rightarrow\mathbb{R}_{\geq 0}$ is a \emph{potential function} for~$(S, G)$ iff:
    \begin{enumerate}[label=P\arabic*, ref=P\arabic*, left=0pt]
        \item\label{P1} for all~$s\in\Sigma$,~$f(s) = 0\Leftrightarrow s\in G$; and

        \item\label{P2} for all~$s, s'\in\Sigma$ and~$\alpha\in\mathcal{A}$, if the system makes a transition~$s\rightarrow^\alpha s'$, then~$f(s')\leq f(s)$.
    \end{enumerate}
\end{definition}

\begin{definition}[\textbf{Stabilizing system}]
    For a system~$S=(\Sigma, \mathcal{A})$, a predicate~$G \subseteq \Sigma$, and a potential function~$f: \Sigma\rightarrow\mathbb{R}_{\geq 0}$ for the pair~$(S, G)$, the pair~$(S, f)$ is \emph{stabilizing} iff there exists an environment predicate~$E$ such that, as long as the environment takes actions such that the system state repeatedly satisfies~$E$, eventually the system state also satisfies~$f = 0$, and keeps satisfying~$f=0$ as long as environment actions keep maintaining~$E$.
\end{definition}

\begin{definition}[\textbf{Compatibility}]
    The stabilizing systems $(S_1, f_1)$,~$(S_2, f_2)$,~$\dots$, and~$(S_n, f_n)$ are \emph{compatible} if there exists an environment predicate~$E$ such that the following holds for~$\parallel_{0\leq i < n} S_i$, for all~$0\leq j < n$:
    \begin{enumerate}[label=C\arabic*, ref=C\arabic*, left=0pt]
        \item\label{C1}~$E\land (\bigwedge_{i\in W_j} f_i = 0)\rightsquigarrow^+ f_j = 0$. 
    
        % \item\label{C1}~$E\land f_1 = 0\rightsquigarrow^+ f_2 = 0$; and

        % \item\label{C2}~$E\land f_2 = 0\rightsquigarrow^+ f_1 = 0$.
    \end{enumerate}
    We call~$E$ a \emph{compatible environment} for the composition.
\end{definition}

\begin{definition}[\textbf{Destabilizing action}]
    Let~$(S_1, f_1)$ and~$(S_2, f_2)$ be two stabilizing systems in the composition~$S$ of some compatible stabilizing systems, where~$S_i = (\Sigma_i, \mathcal{A}_i)$ for~$i\in\{1, 2\}$.
    Let~$\alpha_1\in\mathcal{A}_1$ be an action with which~$S_1$ writes to~$S_2$.
    If there exists a compatible environment~$E$ for~$S$ and states~$s, s'\in\Sigma_2$ such that ($i$)~$s\rightarrow s'$ is an~$E$-step, and ($ii$) either~$f_2(s') \geq f_2(s) > 0$ or~$f_2(s') > f_2(s) = 0$, then we say that~$\alpha_1$ is destabilizing at~$s$.

    % Let~$E$ be a compatible environment for~$S$, and~$e$ be any environment action.
    % Let~$\alpha_1\in\mathcal{A}_1$ be an action with which~$S_1$ writes to~$S_2$, and~$s, s', s''\in\Sigma_2$ be such that we have~$s\rightarrow^e s'\rightarrow^{\alpha_1}s''$, and that~$s$,~$s'$, and~$s''$ all satisfy~$E$.
    % \begin{itemize}
    %     \item If~$f_2(s'') < f_2(s)$, we say that the pair~$(\alpha_1, e, E)$ is strictly stabilizing at~$s$.

    %     \item If~$f_2(s'')\geq f_2(s)$ ($f_2(s'') > f_2(s)$), we say that the tuple~$(\alpha_1, e)$ is destabilizing (strictly destabilizing) at~$s$.
    % \end{itemize}
\end{definition}

\begin{definition}[\textbf{Metastable fault}]
    Let~$i\oplus 1$ denote $i + 1\ \texttt{mod}\ k$ for~$0\leq i\leq k-1$ and some~$k$.
    Given compatible stabilizing systems~$(S_0, f_0)$,~$\dots$,~$(S_{n-1}, f_{n-1})$, where $S_i = (\Sigma_i, \mathcal{A}_i)$, their composition~$S = \parallel_{0\leq i < n}S_i$ \emph{has a metastable fault} iff there exists a cycle~$M = \{S_{i_0}, \dots, S_{i_{k-1}}\}$ of systems in the composition blueprint, and an action~$\alpha_{i_j}\in\mathcal{A}_{i_j}$ for~$0\leq j\leq k-1$, such that for all~$0\leq j < k$:
    \begin{enumerate}[label=M\arabic*, ref=M\arabic*, left=0pt]
        \item\label{M1} (\textbf{Writes-to}) each~$S_{i_j}$ writes to~$S_{i_{j\oplus 1}}$ via~$\alpha_{i_j}$; and

        \item\label{M2} (\textbf{Destabilization}) there exists a state~$s_{i_{j\oplus 1}}\in\Sigma_{i_{j\oplus 1}}$, such that~$f_{i_{j\oplus 1}}(s_{i_{j\oplus 1}}) > 0$ and~$\alpha_{i_j}$ is destabilizing at~$s_{i_{j\oplus 1}}$.
    \end{enumerate}
\end{definition}

\subsection{Proof of Theorem 1}\label{sec:proofs}

%\section{Proofs}\label{sec:proofs}

We begin by proving two lemmas.

\begin{lemma}\label{lemma:compatible}
    Let~$(S_0, f_0)$, $(S_1, f_1)$,~$\dots$, and~$(S_{n-1}, f_{n-1})$ be compatible stabilizing systems,~$S$ be their composition, and~$J$ be a subset of~$\{0,\dots,n-1\}$.
    If~$E$ is a compatible environment for~$S$, then~$E\land(\bigwedge_{j\in J}f_j = 0)$ is a compatible environment for the systems with indices in~$[n]\backslash J$.
\end{lemma}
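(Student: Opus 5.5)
We prove the lemma by unfolding Definition~\ref{def:compatibility} for the sub-composition $S' = \parallel_{i\in[n]\backslash J} S_i$, with candidate environment $E' = E\land(\bigwedge_{j\in J} f_j = 0)$. First we check that $E'$ is a legitimate environment predicate for $S'$. In $S'$, the systems with indices in $J$ have been removed. Every variable of a remaining component that some $S_j$ with $j\in J$ used to write to therefore reverts to being an environment variable. This is the rule from the model: the environment stops writing only to variables taken over by a component, so removing that component hands them back. The states of the removed components, and hence the predicates $f_j = 0$ for $j\in J$, thus live in the environment of $S'$, and $E'$ is a conjunction of environment predicates.

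Next we fix an arbitrary $k\in[n]\backslash J$ and establish C1 for $S'$, namely $E'\land(\bigwedge_{i\in W'_k} f_i = 0)\rightsquigarrow^+ f_k = 0$. Here $W'_k = W_k\backslash J$ is the set of writing neighbors of $S_k$ within $S'$. The key observation is the identity
\[
E'\land\Big(\bigwedge_{i\in W'_k} f_i = 0\Big) \;=\; E\land\Big(\bigwedge_{j\in J} f_j = 0\Big)\land\Big(\bigwedge_{i\in W_k\backslash J} f_i = 0\Big),
\]
and the right-hand side implies $E\land(\bigwedge_{i\in W_k} f_i = 0)$, because $W_k\subseteq J\cup(W_k\backslash J)$. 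By compatibility of the original systems, $E\land(\bigwedge_{i\in W_k} f_i=0)\rightsquigarrow^+ f_k = 0$ holds for $S$.

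It remains to transfer this to the stronger hypothesis and to $S'$. Strengthening the antecedent preserves $\rightsquigarrow^+$. If every prefix satisfies the stronger predicate $P'\subseteq P$, then it satisfies $P$, so $s_i\models G$. Moreover, $G\,\mathcal{U}\,\neg P$ implies $G\,\mathcal{U}\,\neg P'$, since $\neg P$ implies $\neg P'$. We state this as a one-line monotonicity observation from the semantics in Section~\ref{sec:squiggly-semantics}.

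The step we expect to be the main obstacle is the transfer from $S$ to $S'$. We would argue that every execution of $S'$ in which the environment maintains $E'$ corresponds to an execution of $S$ with the same projection onto the components of $S'$. In that execution of $S$, the $J$-components play exactly the role the environment played in $S'$. The correspondence is legitimate because the environment of $S'$ may write those variables arbitrarily, and serializability lets us reinterpret environment actions as the $J$-components' actions. We would then apply the property for $S$ to this execution. Since $f_k$ depends only on $S_k$'s state, the conclusion $f_k=0$, together with the unless-clause, holds for the execution of $S'$ as well. Taking the same $E'$ for every $k$ gives a single compatible environment, which proves the lemma.
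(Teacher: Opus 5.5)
Your implication $E'\land(\bigwedge_{i\in W_k\setminus J}f_i=0)\Rightarrow E\land(\bigwedge_{i\in W_k}f_i=0)$ is correct. So is your antecedent-strengthening property of $\rightsquigarrow^+$, and so is the observation that the writing neighbors shrink to $W_k\setminus J$. Together these are exactly the paper's argument, and you check the monotonicity explicitly where the paper leaves it implicit.

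The gap is your final transfer from $S$ to $S'$. You let the variables formerly written by the $J$-components revert to environment variables that may be written arbitrarily. You then claim that every $E'$-respecting execution of $S'$ corresponds to an execution of $S$. That arbitrariness gives the opposite inclusion: every execution of $S$ projects to one of $S'$. That is the direction for pushing a property of $S'$ up to $S$, not the one you need. The environment of $S'$ can write values that no action in $\mathcal{A}_j$ ever writes. The conjunct $f_j=0$ only constrains the state of $S_j$, not what reaches $S_k$. Serializability does not bridge this either: it only says a joint step equals some serial order of the actions actually taken, and it cannot turn an arbitrary environment write into an action of some $S_j$.

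Under your stand-alone reading the lemma is in fact false. Let $S_0$ have a single state, $f_0\equiv 0$, and one action that writes $x:=0$ into an input variable $x$ of $S_1$. Let $S_1$ have a counter $y\in\mathbb{N}$, $f_1=y$, and one action that decrements $y$ when $x=0$ and $y>0$. With $E$ the trivial predicate the two systems are compatible, because in $S$ only $S_0$ writes $x$. For $J=\{0\}$, however, $E'$ is again trivial and $W'_1=\emptyset$. Yet the environment of the stand-alone $S_1$ can hold $x=1$ forever, so $y$ never reaches $0$.

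The repair is to read the lemma as the paper does. The $J$-components are not removed but merely regarded as part of the environment of the remaining systems. The executions in question are therefore still those of $S$, with the $J$-components acting through their own actions. Then no simulation is needed. C1 for the remaining systems becomes a statement about executions of $S$ with writing-neighbor sets $W_k\setminus J$ and a stronger antecedent. It follows from your implication and monotonicity alone.
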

\begin{proof}
    Note that, for all~$i$:
    \begin{align*}
        E&\land (\bigwedge_{k\in W_i} f_k = 0)\rightsquigarrow^+ f_i = 0\equiv\\
        (E&\land(\bigwedge_{k\in W_i\cap J}f_k = 0))\land(\bigwedge_{k\in W_i \setminus J}f_k = 0)\rightsquigarrow^+ f_i = 0.
    \end{align*}
    Moreover, note that:
    \begin{align*}
        E\land(\bigwedge_{k\in J}f_k = 0) \Rightarrow E&\land(\bigwedge_{k\in W_i\cap J}f_k = 0).
    \end{align*}
    If we consider the systems with indices in~$J$ as part of the environment for the remaining systems, then the set of writing neighbors of any remaining component~$S_i$ changes from~$W_i$ to~$W_i \setminus J$.
    We conclude that~$E\land(\bigwedge_{j\in J}f_j = 0)$ is a compatible environment for the systems with indices not in~$J$.
\end{proof}

\begin{lemma}\label{lem:e-step}
    Let~$(S_0, f_0)$,~$(S_1, f_1)$,~$\dots$, and~$(S_{n-1}, f_{n-1})$ be compatible stabilizing systems, where~$S_i=(\Sigma_i,\ \mathcal{A}_i)$ for~$0\leq i < n$, and~$E$ be a compatible environment for~$S = \parallel_{0\leq i < n}S_i$.
    Consider a transition~$s\rightarrow s'$ of~$S$ wherein the environment takes action~$e$ and the writing neighbors of some component~$S_i$ take actions~$\{\alpha_j\}_{j\in W_i}$, where~$\alpha_j\in\mathcal{A}_j$, and assume that the serialization order is~$s=s_0\rightarrow^{\beta_1} s_1\rightarrow\dots\rightarrow s_k\rightarrow^{\beta_{k+1}} s_{k+1}\rightarrow\dots\rightarrow^{\beta_{m}}s_{m}=s'$, where~$\beta_{k+1}=e$ and the rest of~$\{\beta_l\}$ are a permutation of the actions~$\{\alpha_j\}_{j \in W_i}$.
    If~$s\rightarrow s'$ is an~$E$-step, then every~$s_{l}\rightarrow s_{l+1}$ is also an~$E$-step for~$0\leq l\leq |W_i|-1$.
\end{lemma}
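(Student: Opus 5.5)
The argument is short and rests on one structural fact: an environment predicate mentions only variables that the environment can touch.

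Let $E$ be a compatible environment and let $s\rightarrow s'$ be an $E$-step whose serialization is $s=s_0\rightarrow^{\beta_1}\cdots\rightarrow^{\beta_m}s_m=s'$, with $\beta_{k+1}=e$. I split the serialized chain at the single environment action $e$ and show that $E$ holds in every intermediate state $s_l$. Once that is done, every consecutive pair $s_l\rightarrow s_{l+1}$ has both endpoints satisfying $E$, which is exactly the definition of an $E$-step.

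The key observation is that a component action $\beta_l$ with $l\neq k+1$ is an action of some $S_j$ with $j\in W_i$. Such an action cannot modify the variables that remain environment variables in the composition. The composition rule says that when a component takes over writing a variable, the environment stops writing to it. So the variable set $\mathcal{V}'$ on which $E$ depends is written only by $e$. It follows that $s_l$ and $s_{l+1}$ agree on every variable mentioned by $E$ whenever $l\neq k$.

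The steps are then:
\begin{enumerate}
\item \emph{States before $e$.} For $0\leq l\leq k$, the valuation of $\mathcal{V}'$ is constant along $s_0,\dots,s_k$. Since $s_0=s\models E$, each of $s_0,\dots,s_k$ satisfies $E$.
\item \emph{States after $e$.} For $k+1\leq l\leq m$, the valuation of $\mathcal{V}'$ is again constant along $s_{k+1},\dots,s_m$. Since $s_m=s'\models E$, each of these states satisfies $E$. This is where the hypothesis on $s'$ is needed: $e$ itself may change $\mathcal{V}'$, so $E$ at the later states must be propagated backwards from $s'$.
\item \emph{Conclusion.} Every $s_l$ satisfies $E$, so every serialized transition, in particular those for $0\leq l\leq |W_i|-1$, is an $E$-step.
\end{enumerate}

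The main obstacle is justifying rigorously that component actions leave the variables of $E$ untouched. The model states this only informally, through the clause that the environment stops writing to a variable once a component writes to it. I would make it explicit as the convention that, in the composition, $\mathcal{V}'$ is disjoint from the variables written by any component. If instead some variable were writable by both a component and the environment, the lemma could fail, so the proof should make this reliance visible.

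A minor point is the indexing. There are $m=|W_i|+1$ serialized steps, so the range in the statement presumably covers all of them, or all but one. The argument above covers every $l$ from $0$ to $m-1$, so it handles either reading.
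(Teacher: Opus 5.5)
Your proposal is correct and follows the paper's proof: since only $e$ can change the composition's environment variables, the states before $e$ inherit $E$ from $s$ and the states after $e$ inherit it from $s'$. The paper states this in one line. Your version adds the forward/backward split, makes explicit the convention that component actions never write to variables $E$ depends on, and notes the off-by-one in the index range.
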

\begin{proof}
    The only transition during which the environment variables of the composition change is the transition~$s_k\rightarrow^e s_{k+1}$.
    Therefore, since~$s=s_0$ and~$s'=s_m$ satisfy~$E$, and since the actions~$\{\alpha_j\}_{j\in W_i}$, and therefore~$\{\beta_l\}$, do not change the environment variables, we conclude that all states~$\{s_l\}$ satisfy~$E$.
    We conclude that every transition~$s_l\rightarrow s_{l+1}$ is an~$E$-step.
\end{proof}

\faultsNecessary*
% \begin{theorem}\label{thm:crisis}
%     Let~$(S_0, f_0)$, $(S_1, f_1)$,~$\dots$, and~$(S_{n-1}, f_{n-1})$ be compatible stabilizing systems, and~$S=\bigwedge_{0\leq i < n}S_i$ their composition.
%     If~$S$ has a metastable failure, then it has a metastable fault.
% \end{theorem}
\begin{proof}
    We prove this theorem by proving its contrapositive: if~$S$ has no metastable faults, then~$S$ and any compatible environment~$E$ satisfy~$E\rightsquigarrow^+\bigwedge_{0\leq i < n}f_i = 0$, which implies~$\Box E\Rightarrow\Diamond\Box\bigwedge_{0\leq i < n}f_i = 0$,~$i.e.$,~$S$ does not have a metastable failure.
    %if~$S$ has no metastable faults, if all systems are normal, and if all systems are locally compatible with their reading neighbors, then~$S$ and any compatible environment~$E$ satisfy~$E\rightsquigarrow^+\bigwedge_{0\leq i < n}f_i = 0$,~$i.e.$,~$S$ does not have a metastable failure.
    We proceed by induction on~$n$.
    \begin{description}
        \item[Base case.] If~$n=1$, then we have a single component~$(S_0, f_0)$.
        Stabilization for~$S$ implies the existence of an environment predicate~$E$ such that~$E\rightsquigarrow^+ f_0 = 0$; therefore, a compatible environment for~$S_0$ exists, and for any such environment~$E$, based on compatibility we have~$E\rightsquigarrow^+ f_0 = 0$ as~$S_0$ has no writing neighbors.
        This is exactly what we want to prove.

        \item[Hypothesis.] Any compatible environment~$E'$ for the composition~$S' = \parallel_{0\leq i < k} S'_i$ of any~$k < n$ compatible stabilizing systems~$(S'_0, f'_0)$, $(S'_1, f'_1)$,~$\dots$, and $(S'_{k-1}, f'_{k-1})$, that has no metastable faults, in tandem with~$S'$, satisfies~$E'\rightsquigarrow^+\bigwedge_{0\leq i < k}f'_i = 0$.

        \item[Step.] Let~$E$ be a compatible environment for~$S$.
        Since~$S$ has no metastable faults, then there exists some system~$S_j$ such that either~$W_j = \emptyset$---Let~$J'$ be the index set of all such systems---or for any set of actions~$\{\alpha_k\}_{k\in W_j}$ with~$\alpha_k\in\mathcal{A}_k$ for~$k\in W_j$, the actions~$\{\alpha_k\}$ are not destabilizing at~$s$, which means that each action can only decrease~$f_j$ if positive or maintain it at zero during any~$E$-step; let~$J''$ be the index set of components with the latter quality.
        For every~$j'\in J'$, compatibility implies~$E\rightsquigarrow^+ f_{j'} = 0$.
        Now pick one~$j''\in J''$.
        Since~$f_{j''}$ is a potential function for~$S_{j''}$, then~$S_{j''}$'s own actions do not increase~$f_{j''}$.
        Now, consider an~$E$-step~$s\rightarrow s'$ of~$S$ where the writing neighbors of~$S_{j''}$ take actions~$\{\alpha_l\}_{l\in W_{j''}}$---at least one such component takes an action in each transition per our model.
        This transition is equivalent to some serial order, and according to Lemma~\ref{lem:e-step}, all the corresponding intermediate transitions are also~$E$-steps.
        Therefore, since each action~$\alpha_l$ for~$l\in W_{j''}$ either decreases~$f_{j''}$ if positive or maintains it at zero during an~$E$-step, we conclude that either~$f_{j''}(s) > f_{j''}(s')$ or~$f_{j''}(s) = f_{j''}(s') = 0$.
        %Now, ($i$) the actions of~$S_{j''}$'s neighbors can only either decrease~$f_{j''}$ or maintain it at zero in any~$E$-step, ($ii$) any joint action of its neighbors is serializable, ($iii$) if a transition via the actions of~$S_{j''}$'s neighbors is an~$E$-step then the intermediate transitions in any serialization of this transition are also~$E$-steps, and ($iv$) each neighbor of~$S_{j''}$ will take actions infinitely often.
        Therefore,~$f_{j''}$ will eventually decrease to 0 and remain there,~$i.e.$,~$E\rightsquigarrow^+ f_{j''} = 0$.
        Letting~$J = J'\cup J''$, we have shown that~$E\rightsquigarrow^+ \bigwedge_{j\in J} f_j = 0$.

        Consider now the systems with indices not in~$J$.
        Based on Lemma~\ref{lemma:compatible},~$E\land(\bigwedge_{j \in J} f_j = 0)$ is a compatible environment for the composition of these systems.
        Moreover, their composition inherits not having a metastable fault from~$S$, as otherwise~$S$ would have a metastable fault.
        Therefore, based on the induction hypothesis, we have~$E\land(\bigwedge_{j \in J} f_j = 0)\rightsquigarrow^+ \bigwedge_{j\in[n]\backslash J} f_j = 0$.
        Since we also have~$E\rightsquigarrow^+\bigwedge_{j\in J}f_j = 0$, based on Lemma~\ref{lem:squiggly-tastes-funny} we deduce:
        \begin{align*}
            E\rightsquigarrow^+ (\bigwedge_{j\in J}f_j = 0) \land (\bigwedge_{j\in [n] \setminus J}f_j = 0),
        \end{align*}
        which is just~$E\rightsquigarrow^+ \bigwedge_{0\leq j < n} f_j = 0$.
        This finishes our proof.

    \end{description}
\end{proof}

\section{The Look-Aside Cache Incident}\label{sec:cache-incident}

Following the original implementation demonstrating this failure~\cite{look-aside-cache-estyak2022, huang-2022-metastable}, consider a cache, a webserver, and a database.
The webserver receives client requests and forwards them to the cache.
If the result is a hit, the webserver responds to the client.
If the cache query returns with a miss, the webserver forwards the request to the database, and also starts a timer.
If it receives a response from the database before the timer expires, it sends the response to the client \emph{and} updates the cache with the result; if, on the other hand, the result takes too long to come back, it drops the request.

The cache, if full, helps respond to a lot of the client requests, and therefore moderates significantly the load on the database.
If a \shock{} ($e.g.$, crash) leaves the cache empty, then suddenly all of the client requests will be forwarded to the database, severely congesting it.
As a result, requests will eventually timeout at the webserver, which means that it will not update the cache by virtue of dropping the requests.
This behavior stops the cache from adequately warming up after the \shock{}, sustaining therefore the congestion.

In the following, we first apply our characterization to this incident to reveal the faults and the failure.
We will then demonstrate that, for this particular incident, removing the fault is not particularly out of reach, and show how simple design decisions help the designer get rid of the fault.
Finally, we apply the methodology from \S\ref{sec:methodology} to render the system MFT without eliminating the faults.

\textbf{Potential Functions.} The potential function for the cache is~$f_1 = C_{max} - C$, the number of keys for which it does not have a value---$C_{max}$ is the maximum number of keys the cache can store values for.
It stabilizes in the presence of an environment that populates it with new keys.
The potential function~$f_2$ for the webserver is the number of pending requests, and it stabilizes in the presence of an environment that gives it enough acknowledgments and not too many new requests.
The potential function for the database is~$f_3 = \max\{0, Q-s\}$, where~$Q$ is the number of requests in its queue and~$s$ is its service rate.
It stabilizes in the presence of an environment that does not overload it.

\textbf{Compatibility.} Let~$E$ denote an environment that, at each step, gives to the webserver~$r$ client requests, where~$0 < r < s$.
The database's only writing neighbor is the webserver.
Because the client load does not overload the database, eventually the database will eliminate any existing congestion and respond to the webserver in a timely fashion, regardless of the webserver's potential.
Therefore, we have~$E\land f_2 = 0\rightsquigarrow^+ f_3 = 0$.
The webserver's writing neighbors are the cache and the database.
If the cache is full, and the database is not congested, the webserver will get timely responses for all of the client requests it receives,~$i.e.$, we have~$E\land f_1 = 0\land f_3 = 0\rightsquigarrow^+ f_2 = 0$.
Finally, the cache's only writing neighbor is the webserver.
If we repeatedly have~$f_2 = 0$, it means that the webserver has no pending requests, implying that the database is not congested; the cache will be populated, and we have~$E\land f_2 = 0\rightsquigarrow^+ f_1 = 0$.
We have established compatibility.

\textbf{Metastable Fault.} We model the cache with the single action~$\texttt{cache-serve}$, with which it responds to the requests sent from the webserver in a step.
The cache might be empty, and its response to all of the requests can be a miss, therefore this action can possibly maintain the webserver's potential~$f_2$.
We model the webserver with one action as well: the action~$\texttt{webserver-serve}$.
This action sends new requests to the cache, missed ones to the database, updates the cache, and drops also the requests that time out.
This action might send the cache zero updates if all pending requests time out, and therefore maintains the cache's potential~$f_1$; there exists a metastable fault between the cache and the webserver.

The action~$\texttt{webserver-serve}$ can increase the database's potential~$f_3$ if it sends the database more than~$s$ client requests, and the database's only action~$\texttt{database-serve}$, which serves client requests in its queue and responds to the webserver, might not decrease the webserver's potential if all of the corresponding requests have timed out.
Therefore, a metastable fault exists between the webserver and the database as well.

\textbf{Metastable Failure.} The two metastable faults are fanned into a metastable failure by two scheduling decisions: ($i$) the webserver schedules to drop requests every~$T$ steps for some~$T$, and ($ii$) the database does not prioritize new requests over older ones by simply serving requests from its queue according to arrival order.
After the \shock{} that empties the cache, the webserver congests the database by repeatedly sending it more client requests than it can handle.
Once this condition persists long enough, old requests will time out at the webserver.
Then, since the backlog in the database's queue has only been increasing, \emph{all younger requests} will also eventually time out as they will wait even longer than the first request that timed out.
The result is a stable backlog that starves the cache, and therefore sustains itself indefinitely.

\subsection{Removing the Faults}
Unlike the other case studies, where the fault is irremovable (retry storm) or too delicate to locate a priori (oscillating membership), in this incident formalizing the faults reveals a fix to remove them altogether.
The webserver should randomly tag some of the pending requests among each~$s$ requests that it sends to the database, and \emph{never time them out}.
This way, assuming a suitable distribution of keys in the incoming client requests, the webserver will always update the cache after a \shock{}, and therefore its action will always decrease the cache's potential~$f_1$.
This breaks the cyclic destabilizing interaction between the cache and the webserver.
The fixed system will never experience the metastable failure outlined above.

\subsection{Applying the MFT Methodology}

One can render the system MFT without removing the fault.

\textbf{Derive metastability skeleton.} The original implementation showcasing this failure~\cite{look-aside-cache-estyak2022} requires a careful design including a memcached cache~\cite{memcached}, an NGINX~\cite{nginx} webserver expressed in PHP~\cite{php}, and a mySQL database~\cite{mysql}.
We instead express each component as a \dsl{} agent, and abstract away every detail that is not relevant to the metastable failure.
What remains is the metastability skeleton of the composition, which captures how requests flow in the system and how each agent responds to different types of requests.
We ignore key replacement in the cache, as it does not pertain to the warmup phase and has little effect on this particular failure.

\textbf{Study dynamics.} Once equipped with the metastability skeleton of the system, we can probe the system's behavior by asking \dsl{} to generate the vector field.
For simplicity, we only look at the vector field of cache and database potentials. 
Figure~\ref{fig:look-aside-cache} demonstrates the result, where the client sends requests at a rate of~$r=40$ requests per step, the database serves them at a rate of~$s=10$ requests per step, and the webserver timeout is~$T=8$ steps.
We instruct \dsl{} to generate 100 traces, using two simultaneous \shock{}s with a random duration: cache crash and surge in client demand.

\begin{figure}[H]
    \centering
    \includegraphics[width=\columnwidth]{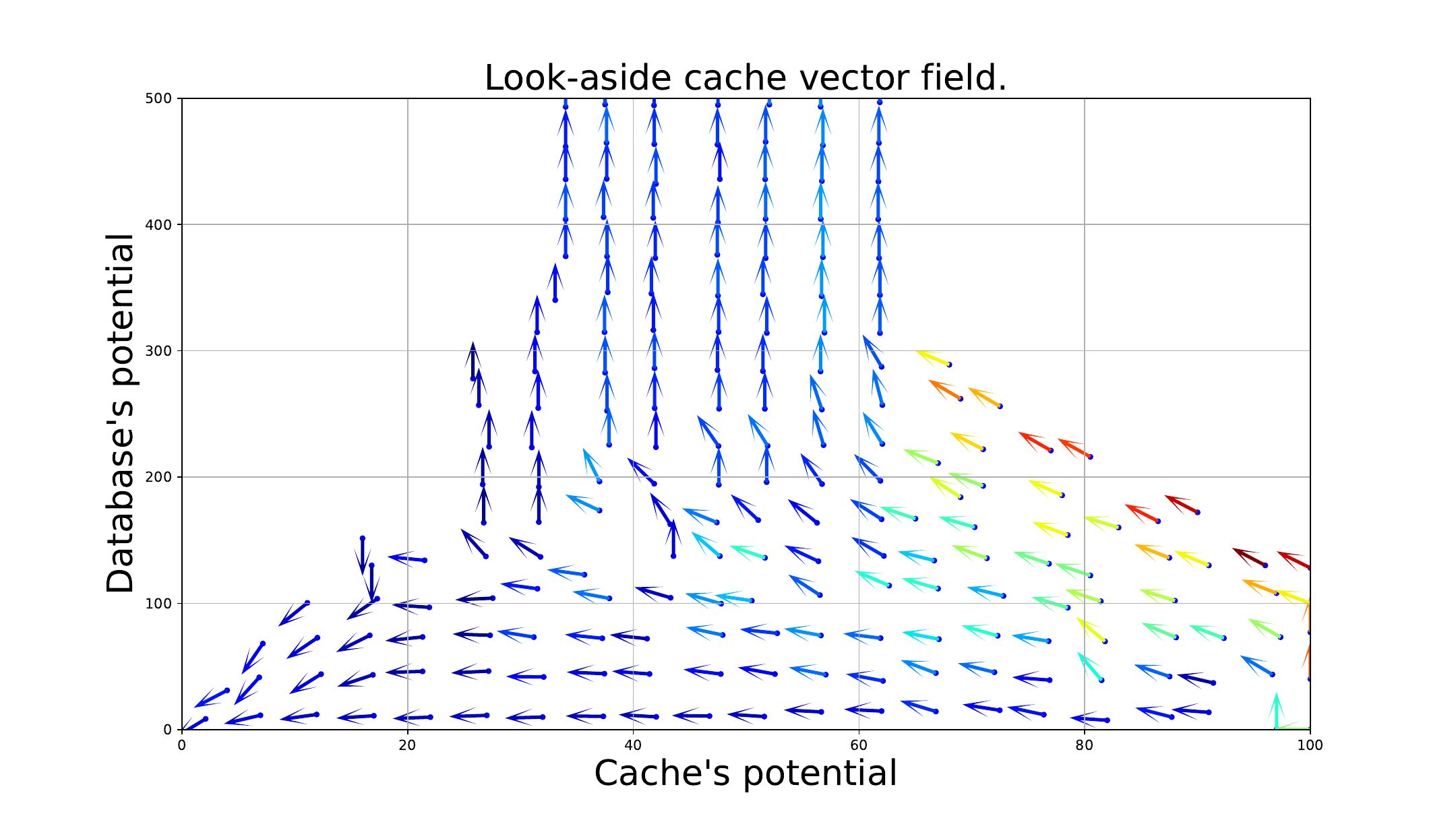}
    \caption{The \dsl{} vector field of the look-aside cache. Arrow color indicates tendency intensity at that point; warmer means higher intensity.}
    \label{fig:look-aside-cache}
\end{figure}

We make two observations: ($i$) the figure demonstrates the two coexisting tendencies, one pulling towards a full cache and an empty database, and the other pulling towards a stagnant cache and an ever-increasing database backlog, and ($ii$) the backlog is stable; once the bad tendency wins the tug of war, there is no hope for the system to recover without manual intervention.
This reinforces our account of the failure, wherein all pending requests eventually time out once the first timeout happens.

\textbf{Manage scheduling.} To make this system MFT, we have to change how the components schedule their actions.
One option is to pick a~$T$ large enough that the cache has ample time to warm up.
Another, more subtle approach is to change the scheduling in the database, and enforce a mechanism for prioritizing younger requests.
This reduces the chances of them timing out, and gives the system more time to fill the cache.
The details of the prioritization depend on the parameters and the key distribution in client requests.

\textbf{Complete the proof.} Starting from any arbitrary state in the aftermath of a \shock{} emptying the cache, a race starts between the tendency filling in the cache and the timer ticking in the webserver.
If the cache is sufficiently populated before the first timeout in the webserver, the system will never experience a metastable failure.
On the other hand, if the first timeout happens before the cache is sufficiently populated, then the cache will never be sufficiently populated.
Therefore, proving correctness for this system entails proving that the cache fills fast enough for there to be no timeouts, which corresponds to proving R2.
As for R1, since, as explained above, being sufficiently full is a stable property, the system will never timeout, and therefore never destabilize itself.